\newtheorem{definition}{Definition}
\newtheorem{thm}{Theorem}
\newcommand{\FuncName}[1]{\mbox{\normalfont\textsc{#1}}}
\newcommand{\Int}{\KwSty{int}}
\newcolumntype{M}[1]{>{\centering\arraybackslash}m{#1}}
\DeclareSymbolFont{symbolsC}{U}{txsyc}{m}{n}
\DeclareMathSymbol{\notniFromTxfonts}{\mathrel}{symbolsC}{61}
\begin{document}
%
\title{Frequent-Pattern Based Broadcast Scheduling for Conflict Avoidance in Multi-Channel Data Dissemination Systems}

\author{Chuan-Chi Lai, Yu-De Lin, and~Chuan-Ming Liu
\IEEEcompsocitemizethanks{
\IEEEcompsocthanksitem C.-C. Lai is with the Department of Information Engineering and Computer Science, Feng Chia University, Taichung 40724, Taiwan (e-mail: chuanclai@fcu.edu.tw).
\IEEEcompsocthanksitem Y.-D. Lin is with Wistron (Taiwan), Taipei 11469, Taiwan (e-mail: jasons8021@gmail.com).
\IEEEcompsocthanksitem C.-M. Liu is with the Department of Computer Science and Information Engineering, National Taipei University of Technology, Taipei 10618, Taiwan (e-mail: cmliu@ntut.edu.tw).}
}

%
%


\IEEEtitleabstractindextext{%
\begin{abstract}
With the popularity of mobile devices, using the traditional client-server model to handle a large number of requests is very challenging. Wireless data broadcasting can be used to provide services to many users at the same time, so reducing the average access time has become a popular research topic. For example, some location-based services (LBS) consider using multiple channels to disseminate information to reduce access time. However, data conflicts may occur when multiple channels are used, where multiple data items associated with the request are broadcast at about the same time. In this article, we consider the channel switching time and identify the data conflict issue in an on-demand multi-channel dissemination system. We model the considered problem as a Data Broadcast with Conflict Avoidance (DBCA) problem and prove it is NP-complete. We hence propose the frequent-pattern based broadcast scheduling (FPBS), which provides a new variant of the frequent pattern tree, FP*-tree, to schedule the requested data. Using FPBS, the system can avoid data conflicts when assigning data items to time slots in the channels. In the simulation, we discussed two modes of FPBS: online and offline. The results show that, compared with the existing heuristic methods, FPBS can shorten the average access time by 30\%.
\end{abstract}

\begin{IEEEkeywords}
Location-Based Service, Wireless Data Broadcasting, Data Conflicts, Scheduling, Frequent Pattern, Access Time
\end{IEEEkeywords}}

\maketitle

\IEEEdisplaynontitleabstractindextext

%
\IEEEpeerreviewmaketitle

\section{Introduction}\label{sec:introduction}


%
%
%
%
\IEEEPARstart{W}{ith} advances in wireless communications technologies, mobile devices deeply affect our daily lives, such as notebooks, smart phones, and tablets. Users can easily access various information services, such as on-line news, traffic information,and stock prices. Recently, wireless data dissemination becomes a popular topic~\cite{7558129,7287641,1649413}, which can transmit information to a number of users simultaneously. In comparison with the conventional end-to-end transmission (or client-Server) model, wireless data dissemination can make use of wireless network channels to reduce the delivery time for obtaining information. Wireless data broadcasting is well-suited to the Location-Based Services (LBS) in an asymmetric communication environment, where a large number of users are interested in popular information such as news~\cite{8102432}, traffic reports~\cite{1412085}, and multimedia streams~\cite{8260541,8680668}.

In general, wireless data dissemination can be classified into two modes: push-based and pull-based (on-demand). In push-based wireless data dissemination environments~\cite{Viswanathan94adaptivewireless,DBLP:conf/wmcsa/ZdonikFAA94,Acharya1996}, data items are disseminated cyclically according to a predefined schedule. In fact, the access pattern of data items may change dynamically, and the broadcast frequency of popular data items may be lower than the broadcast frequency of unpopular data items. Such a case will result in a poor average access latency.
In view of this, pull-based wireless data dissemination~\cite{811450,Liu201376,He2015118} that disseminates data items timely according to the received requests was proposed to overcome the aforementioned drawback. In the pull-based mode, the users first upload their demand information to the server through the uplink channel, and then the relevant information will be immediately arranged into the broadcasting channels for disseminating data to users. In wireless data dissemination environments, a way of judging the quality of a scheduling approach is to measure the access time of the generated schedule. The access time is a measured time period from starting tuning the channels to obtaining all the requested information. Thus, it is important to have a better broadcasting schedules for shorter access time.

\subsection{Motivation}
In early literature, some conventional works~\cite{7524596,1549811,WU2005714} focus on how to maximize the bandwidth throughout or minimized the access time in single channel environments. Recently, with the advance on antenna techniques, most of works~\cite{1039849,Zheng:2005:TNS:1071246.1071252,4407711} has shifted their focus on the similar issues in multiple channel environments. In general, a multi-channel wireless data dissemination system can provide a more network bandwidth and a shorter access time for data dissemination than a single-channel wireless data dissemination system can. 

However, one new issue, data conflict~\cite{6195838,6216357,6477044}, emerges while each client retrieves data items on multiple channels with channel switching in push-based broadcasting environments.
Two types of conflicts may occur in multi-channel dissemination systems. The first type of conflict is that two required data items are allocated on the same time slot of different channels, so the client cannot download the required data items simultaneously. The second type of conflict occurs if two required data items are allocated on the $t$ and $(t+1)$ time slots of different channels respectively. In such a scenario, the client cannot download both required data items during the time period $[t, t+1]$. The 1st conflict type is obvious. The
reason of the 2nd conflict type is that switching from any channel to a different channel takes time. A client cannot download data at time slot $t+1$ from one channel if it was downloading data item from another channel at time slot $t$, because a time slot is already the smallest unit for data retrieving. Note that a client is allowed to access one channel at one time.

Such a data conflict issue makes a client miss its needed data items during the time period for channel switching, thereby leading to a worse access time. 
On one hand, some works~\cite{6195838,6216357,6477044} provide some solutions from the client's point of view. These solutions can make each client schedule itself for retrieving the data items on channels efficiently. 
On the other hand, only one work~\cite{He2015118} provides a server-side scheduling algorithm with consideration of the data conflict issue in on-demand multi-channel environments. The provided algorithm considers the associations between data items and requests while allocating data items on multiple channels and this provides a conflict-free schedule.

Most broadcast scheduling techniques in on-demand multi-channel data dissemination environments do not consider the time requirement for channel switching, thereby leading to data conflicts or long access time. This phenomenon motivates us to propose a more efficient server-side scheduling method with conflict avoidance using frequent pattern mining technique, thereby shortening the average access time.

\subsection{Contribution}
In this study, we discuss how to shorten the average access time on a multi-channel wireless data dissemination environment under the data conflict conditions. The contributions of this work are listed as follows.
\begin{enumerate}
	\item Identify the data broadcast with conflict avoidance (DBCA) problem in on-demand multi-channel wireless data dissemination environments and prove the considered DBCA problem is $\mathcal{NP}$-complete.
	\item We propose a heuristic approach, Frequent-Pattern based Broadcast Scheduling (FPBS), for providing an approximate schedule in polynomial time. Inspired by frequent-pattern tree (FP-tree), we suggest a new tree, FP*-tree, for FPBS to schedule the requested data items with the consideration of channel switching.
	\item We analyze the time complexity and average access time of FPBS in both average case and worst case.
	\item We verify the performance of FPBS which achieves a shorter average access time in comparison with the existing method, UPF~\cite{He2015118}.
\end{enumerate}

The rest of this paper is organized as follows. Section~\ref{related_work} gives the background and reviews related research in the literature. Section~\ref{problem} defines the DBCA problem and proves that the DBCA problem is $\mathcal{NP}$-complete. Section~\ref{FPBS} explains the proposed approach with examples and algorithms in detail. In Section~\ref{analysis}, we discuss the time complexity and access time of the proposed approach in worst case. Section~\ref{simulation} presents the experimental simulation results and validates the correctness and effectiveness of the proposed methods in various situations. Finally, we conclude this work in Section~\ref{conclusion}.



\section{Related Work}
\label{related_work}
In the multi-channel dissemination environments, many related research works focused on data scheduling to improve the access time performance~\cite{1039849,Zheng:2005:TNS:1071246.1071252} from the perspective of spectrum utilization. Yee et al.~\cite{1039849} proposed a greedy algorithm to find the best way to distribute data items into the channels, allowing users to access requested data in a limited time. Zheng et al.~\cite{Zheng:2005:TNS:1071246.1071252} considered the data access frequency, data length, and channel bandwidth into a model and proposed a Two-level Optimization Scheduling Algorithm (TOSA) to find an appropriate schedule. They also showed that the schedule of TOSA is approximate to the best average time. Yi et al.~\cite{4407711} proposed a method to allow replicating multiple copies of a data item in a broadcasting channel. If there are multiple copies of a popular data item in the channel, the average access time can be effectively reduced.

In addition to the above methods, some works considered the priority of incoming queries and found ways to reduce the access time~\cite{Liu201376,7524596,1549811,LV201223}. Lu et al.~\cite{7524596} proposed some algorithms to schedule data for Maximum Throughput Request Selection (MTRS) and Minimum Latency Request Ordering (MLRO) problems in a single-channel environment and proved that both problems are $\mathcal{NP}$-hard. Xu et al.~\cite{1549811} proposed a $SIN-\alpha$ algorithm with a set of priority decisions based on the ratio of the length of the expiration time over the amount of information. 
Lv et al.~\cite{LV201223} proved that minimizing access time in the broadcasting scheduling of multi-item requests with deadline constraint in a single channel environment is an $\mathcal{NP}$-hard problem. The authors provided a profit-based heuristic scheduling algorithm to minimize the request miss rate (or delivery miss rate) considering the access frequency of data.
Liu and Su~\cite{Liu201376} focused on reducing the demand for the loss rate and shortening the access time. Two kinds of algorithms, Most Popular First Heuristic (MPFH) and Most Popular Last Heuristic (MPLH), were proposed to solve the problems and they also analyzed differences between the online version (the user demands continuously come in the system, so the scheduling task needs to wait until it starts receiving information of the demands) and offline version (the system already has all the information of demands). 

Some works had found that the dependency between requested data items may greatly influence the performance of multi-channel data broadcasting.
Lin and Liu~\cite{Lin:2006:BDD:1143549.1143711} considered the dependencies among data items as a Directed Acyclic Graph (DAG). They proved that finding the best schedule preserving dependencies between each data item is an $\mathcal{NP}$-hard problem and proposed some heuristics for the problem.
Qiu et al.~\cite{8309412} proposed a three-layer on-demand data broadcasting (ODDB) system for enhancing the uplink access capacity by introducing a virtual node layer. Each virtual node can merge duplicated requests and help the server reduce huge computational load, there by improve the broadcasting efficiency. 

\begin{table*}[!t]
	\renewcommand{\arraystretch}{1.1}
	\caption{Comparisons of Related Works and the Proposed Method}
	\label{compared_methods}
	\centering
	\small
	\begin{tabular}{|M{1.5cm}|M{1.8cm}|M{1.8cm}|c|M{1.6cm}|M{1.85cm}|M{1.5cm}|M{1.25cm}|}
		\hline
		\textbf{Related Works} & \textbf{System Model} & \textbf{Criterion} & \textbf{Method}& \textbf{Data Popularity} & \textbf{Request Dependency} & \textbf{Channel Switching} & \textbf{Data Conflict} \\ \hline
		\hline
		\cite{1039849,Zheng:2005:TNS:1071246.1071252,4407711} & On-demand & Latency & Server-side & No & No & No & No \\ \hline	
		\cite{Liu201376}\cite{7524596}\cite{1549811} & On-demand & Latency & Server-side & Yes & No & No & No \\ \hline
		\cite{6216357} & Push-based & Latency & Client-side & No & No & Yes & Yes \\ \hline	
		\cite{6477044} & Push-based & Throughput & Client-side & No & No & Yes & Yes \\ \hline		
		\cite{LV201223} & On-demand & Request Miss Rate & Server-side & Yes & No & No & No \\ \hline
		\cite{Lin:2006:BDD:1143549.1143711}\cite{8309412} & On-demand & Latency & Server-side & No & Yes & No & No \\ \hline
		\cite{Liu:2008:DSM:1626536.1626546} & On-demand & Latency & Server-side & Yes & Yes & No & No \\ \hline
		\cite{He2015118} & On-demand & Request Miss Rate & Server-side & Yes & Yes & Yes & Yes \\ \hline
		\textbf{Our work} & \textbf{On-demand} & \textbf{Latency} & \textbf{Server-side} & \textbf{Yes} & \textbf{Yes} & \textbf{Yes} & \textbf{Yes} \\ \hline
	\end{tabular}
\end{table*}

Lu et al.~\cite{6195838,6216357,6477044} firstly defined two types of well-known data conflicts in multi-channel broadcast applications. They proved the client-side retrieval scheduling problem is $\mathcal{NP}$-hard and provided some client-side data retrieval algorithms for helping clients to retrieve data within multiple channels efficiently. Liu et al.~\cite{Liu:2008:DSM:1626536.1626546} firstly proposed a server-side heuristic data scheduling algorithm, Dynamic Urgency and Productivity (DUP), for on-demand multi-channel systems with consideration of the request conflict (or request overlapping) issue and the dependency between requests for scheduling at the request level and giving higher priorities to the requests which are close to their deadlines. Such an approach provided a counteracting effect to the request starvation problem and improve the utilization of broadcasting bandwidth. However, they did not consider two types of data conflicts. He et al.~\cite{He2015118} proposed a server-side heuristic scheduling approach, most Urgent and Popular request First (UPF), with the consideration of two types of data conflicts in on-demand systems. Except for UPF method, the hardness of data scheduling problem considering two types of data conflicts from the server perspective is seldom discussed. 

The comparisons of the existing works and this paper are summarized in Table~\ref{compared_methods}. In this work, we propose a new server-side heuristic scheduling approach for providing a conflict-free multi-channel data broadcast service with a better performance on the average access time.

\section{Problem Description}
\label{problem}
The length of a broadcasting cycle is an important factor which is normally predefined in the wireless data dissemination applications. Most of existing data scheduling strategies focus on investigating how to efficiently schedule data items in each broadcasting cycle. To validate the performance of a scheduling strategy, average access time (or average latency), is the commonly and widely used metric. If the average access time is shorter, users generally can obtain all the requested data in a shorter time, meaning that the used scheduling strategy is more efficient. In the following subsections, we will describe the considered system model, define the considered scheduling problem, and then prove the hardness of this problem.

\subsection{System Model}
In this work, the considered on-demand multi-channel data dissemination system is shown in Fig.~\ref{fig:system_arch} and we only consider the one-hop broadcasting scenario. The considered data dissemination system uses $|C|+2$ antennas with \emph{Orthogonal Frequency Division Multiplexing} (OFDM) technique~\cite{10.5555/555664} to provide $|C|$ downlink broadcast channels, $1$ downlink index channel and $1$ uplink request channel, where $C=\{c_1,c_2,\dots,c_{|C|}\}$ and $|C|>1$. The downlink index channel and request uplink channel are denoted as $c_{\text{index}}$ and $c_{\text{uplink}}$, respectively. Each user device has two antennas with one for receiving data over the downlink broadcast channels and one for transfering requests via the uplink request channel. We assume that each user device can only access one channel at one time. We assume that all the channels are non-overlapping, synchronous and discretized into fixed-duration slots. The broadcasting server puts the requests coming from the uplink channel into a buffer with \emph{First-Come-First-Serve} (FCFS) strategy and handles all the received requests in a batch manner. In this work, we only focus on the efficiency of (application-layer) data/packet scheduling for users to retrieve the requested data items by accessing the downlink channels.

We assume that all the requested data items are in a dataset $D=\{d_1,d_2,\dots,d_{|D|}\}$, where $|D|$ is the size of $D$, and the length of a broadcasting cycle is $L=|D|$ in default. Suppose that there are $n$ queries, $Q=\{q_1,q_2,\dots,q_n\}$, and each query $q_i$ requests $k$ data items from the dataset $D$, where $i=1,2,\dots,n$ and $k=1,2,\dots,|D|$. We let $q_i=\{d_1^i,d_2^i,\dots,d_k^i\}$ and all the data items have the same data size, where $d_j^i \in D$, $j=1,2,\dots,k$, and $\cup_{i=1}^{n} q_i\subseteq D$. Thus, the system has to arrange the requested data items into $|C|$ broadcasting channels. Note that each time slot on a broadcasting channel can contain at most one data item and data replication is only allowed on different channels. That is, multiple copies of one data item may be placed within a broadcasting cycle. Suppose $L$ is the cycle length, each index $I_t$ at time slot $t$ records the informations about all the data items in time slot $t'$ and the corresponding requests of these data items, where $t'$ is obtained by
\begin{equation}\label{eq1:index_rule}
t' = \left\{
\begin{array}{cl}
(t+2)\mod L, & \mbox{if $t+2> L$} \\
t+2, & \mbox{otherwise.}
\end{array} \right.
\end{equation}
When a client tunes in the channel, it will access the index channel in advance until obtaining information about the first required data item.

\begin{figure}[!t]
	\centering
	\includegraphics[width=0.9\columnwidth]{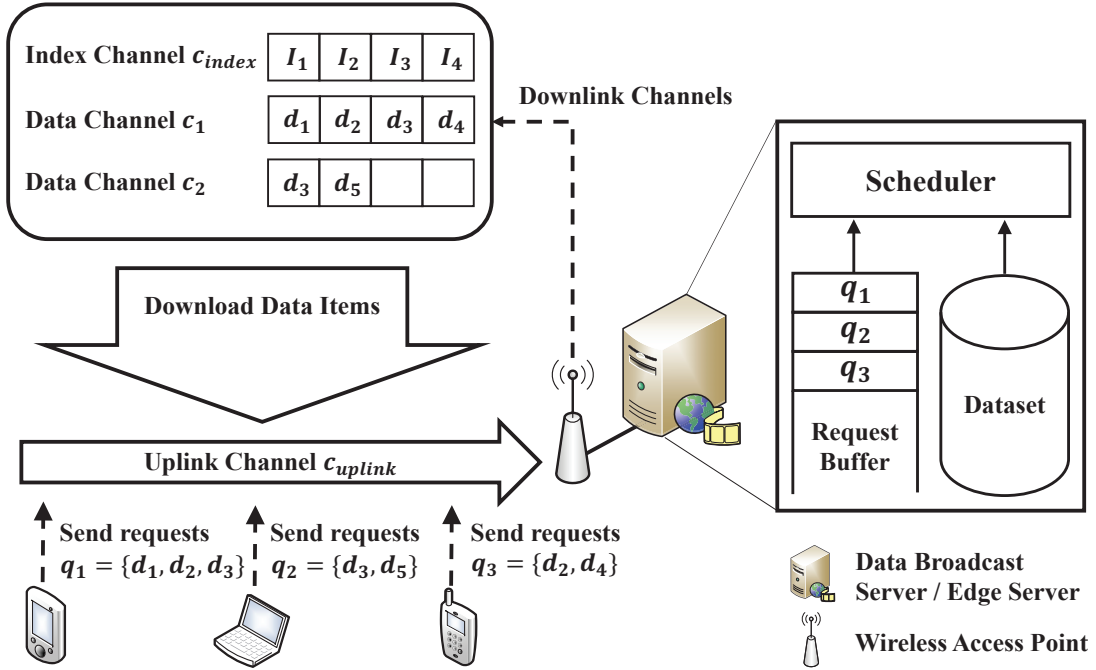}
	\caption{The considered on-demand multi-channel wireless data dissemination environment.}
	\label{fig:system_arch}
\end{figure}

\subsection{Problem Formulation}
The considered scheduling problem can be treated as a mapping $\mathcal{M}$ that data items associated with all the queries to $|C|$ broadcasting channels. For each data item $d_j^i\in D$ associated with a query $q_i\in Q$, let $pos(q_i,d_j^i)=(c_j^i,p_j^i)$ be the position of data item $d_j^i$ in the broadcast, where $c_j^i$ is the channel number, $1\leq c_j^i \leq |C|$ , and $p_j^i$ is the location of $d_j^i$ on that channel, $1\leq p_j^i \leq |D|$. Such a mapping $\mathcal{M}: Q\times D\rightarrow \mathbb{N}\times \mathbb{N}$ is a 1-to-1 mapping.

Since there are multiple channels and each user can only tunes into one broadcasting channel at one time instance, each user may switch channels many times for retrieving all the requested data items on different channels. 
In general, channel switching is a relatively fast operation (in the microseconds range)~\cite{10.1023/B:WINE.0000013082.03518.2e,10.1016/j.pmcj.2005.11.001}. For simplicity, we follow the similar assumptions about channel switching in~\cite{6477044}, and each channel switching takes one time slot in the considered data dissemination environment.
Fig.~\ref{fig:channel_switching} shows an example of the channel switching. However, channel switching may cause a new problem, data conflict, in multi-channel wireless data dissemination systems.
For example, if one of requested data items for request $q_i$ is placed at the previous, the same, or the later location of a scheduled data item which is also associated with $q_i$ on different channels, a data conflict occurs. An example of data conflicts is presented in Fig.~\ref{fig:data_conflict}. The data conflict may result in a longer access time and can be defined as Definition~\ref{data_conflict}. 
\begin{figure}[!t]
	\centering
	\includegraphics[width=0.8\columnwidth]{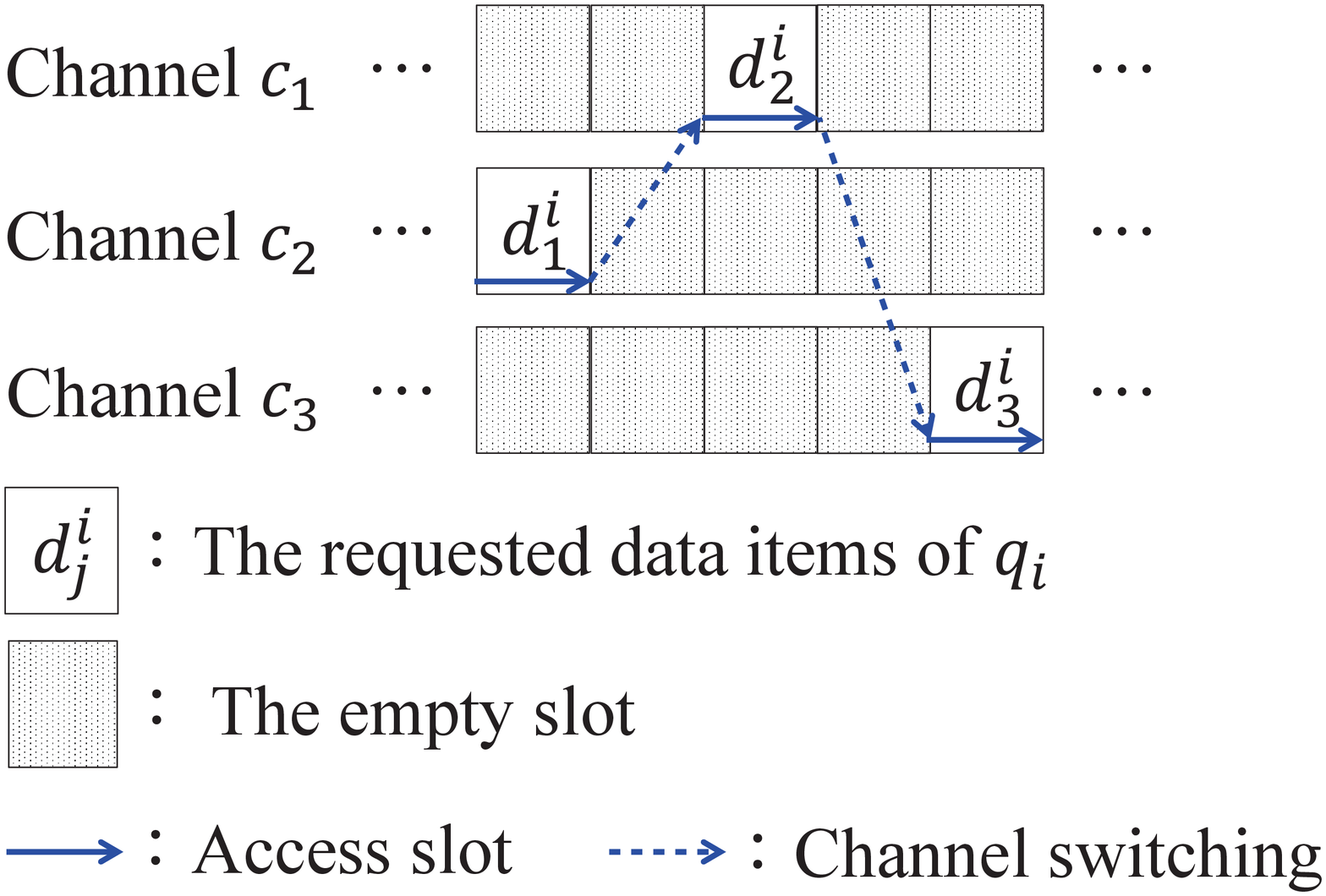}
	\caption{An example of channel switching, where the data items $d_1^i$, $d_2^i$, and $d_3^i$ are requested by $q_i$.}
	\label{fig:channel_switching}
\end{figure}

\begin{definition}[Data Conflict]
	\label{data_conflict}
	For a query $q_i$, two requested data items $d_j^i$ and $d_{j'}^i$, $1\leq j\neq j' \leq k$, if $c_j^i \neq c_{j'}^i$, the conflict occurs when $p_j^i = p_{j'}^i$ or $|p_j^i - p_{j'}^i|=1$.
\end{definition}
\begin{figure}[!t]
	\centering
	\includegraphics[width=0.8\columnwidth]{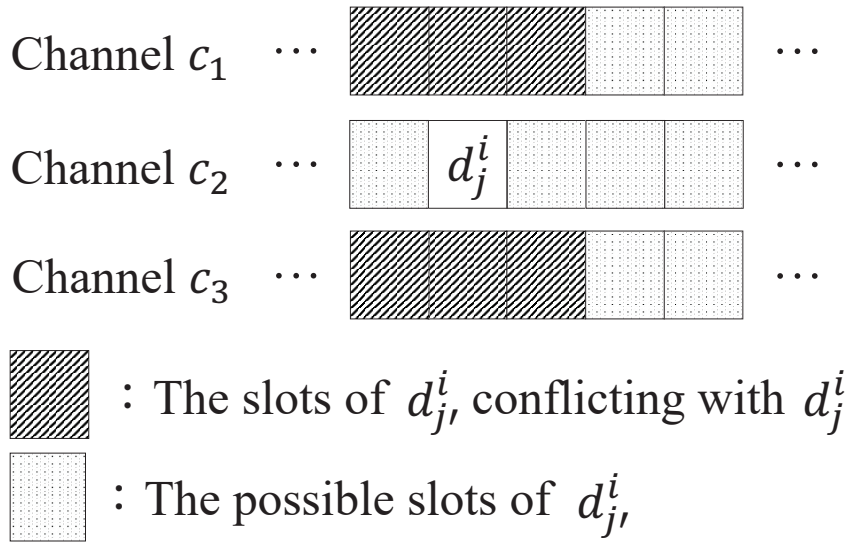}
	\caption{An example of the data conflict problem.}
	\label{fig:data_conflict}
\end{figure}

Let $loc_{min}^i$ denote the minimum value of all the locations of the data items associated with $q_i$ and $loc_{Max}^i$ is the maximum value of all the positions of the data items associated with $q_i$. In other words, $loc_{min}^i=\displaystyle\min_{1\leq j \leq k}p_j^i$ and $loc_{Max}^i=\displaystyle\max_{1\leq j \leq k}p_j^i$. The access time of query $q_i$, $acc(q_i)$, can be defined as $|loc_{Max}^i-loc_{min}^i|$, while the search starts from the beginning of the broadcasting cycle. The average access time for a mapping $\mathcal{M}$ is thus $acc_{\mathcal{M}}=\dfrac{\sum_{i=1}^{n}acc(q_i)}{n}$.
%
%

In summary, the problem we want to solve in this work is \emph{Data Broadcast with Conflict Avoidance} (DBCA) problem which can be defined as follows.
\begin{definition}[DBCA problem]
	\label{conflict_avoid_problem}
	Suppose all the notations are defined as above. The DBCA problem is to find a mapping $\mathcal{M}: Q\times D\rightarrow \{1,\dots,|C|\}\times \{1,\dots,L\}$ such that
	\begin{enumerate}
		\item there is no data conflict for each query in the mapping, i.e., w.r.t. query $q_i$, for each pair of data items $d_j^i$ and $d_{j'}^i$, $1\leq j\neq j' \leq k$, we have  $|p^i_j - p^i_{j'}|>1$ when $c^i_j \not= c^i_{j'}$; and
		\item the average access time of $\mathcal{M}$, $acc_{\mathcal{M}}=\dfrac{\sum_{i=1}^{n}acc(q_i)}{n}$, is minimized.
	\end{enumerate}	
\end{definition}

\subsection{NP-completeness}
To the best of our knowledge, most of the existing works only considered the schedules without data replication in a broadcasting cycle. They did not discuss and analyze the schedules with conflict avoidance problem on multi-channel dissemination environments in detail. Conversely, our proposed approach, FPBS, considers a multi-channel dissemination environment which allows replicating data items on different channels of a broadcasting cycle. In such a scenario, we investigate the data conflict problem and propose a new approach to avoid this problem. In this subsection, we will prove DBCA problem is $\mathcal{NP}$-complete.

In the definition of DBCA problem, the first objective indicates that the broadcasting schedule avoids the data conflict problem. The second objective is to minimize the average access time. Since the server has no prior knowledge about the coming requests, the process for scheduling the broadcasting is made in an online fashion. We first look at the offline version of the DBCA problem in the following and it refers to Conflict-free Data Broadcasting with Minimum average Latency (CDBML) Problem, and define it as below.
\begin{definition}[CDBML problem]
	\label{cfdbml}
	\textbf{Instance:} There are $|C|$ data broadcasting channels with cycle length $L$, a set of $|D|$ data items $D=\{d_1,\dots,d_{|D|}\}$, and a set of $n$ requests $Q=\{q_1, \dots,q_n\}$. Each request $q_i$, $1\leq i\leq n$, is associated with $k$ data items, $d_1^i,d_2^i,\dots,d_k^i$, where $d_j^i\in D$, $1\leq j\leq k\leq |D|$. Any two data items associated with two different requests are different, and every data item needs an unit time $u_t$ to be broadcast. Let $loc_{min}^i$ and $loc_{Max}^i$ be the start time and finish time of $q_i$, respectively.
	
	\textbf{Question:} Does there exist a mapping $\mathcal{M}: Q\times D\rightarrow \{1,\dots,|C|\}\times \{1,\dots,L\}$ such that
	\begin{enumerate}
		\item For two data items $d_j^i$ and $d_j^{i'}$ associated with $q_i$, $|p_j^i-p_{j'}^i|> 1$; and
		\item the average access time, $\sum_{i=1}^{n}|loc_{Max}^i-loc_{min}^i|/n$, is minimized.
	\end{enumerate}	
\end{definition}

In the definition of CDBML problem, the first objective indicates that the broadcasting schedule avoids the data conflict problem. The second objective is to reduce the average access time and all of the data items associated some request $q_i$ should be broadcasting before the end of the broadcasting cycle. $W_i$ is an indication function used to present if a request is served or not. To show further
that the CDBML Problem is $\mathcal{NP}$-complete, we consider a special case of it, where the number of data items associated with each request is the same and equal to the number of channels. That is, we consider the case $k=|C|$. The data items associated with different requests are all different. The following gives the definition of the decision problem for the above special case.
\begin{definition}[CDBML$\rho$ problem]
	\label{dbca_i}
	\textbf{Instance:} There are $|C|$ data broadcasting channels with cycle length $L$, a set of $|D|$ data items $D=\{d_1,\dots,d_{|D|}\}$, a set of $n$ requests $Q=\{q_1, \dots,q_n\}$, and an integer $h$. Each request $q_i$, $1\leq i\leq n$, is associated with $|C|$ data items, $d_1^i,d_2^i,\dots,d_{|C|}^i$, where $d_j^i\in D$, $1\leq j\leq |C|\leq |D|$. Any two data items associated with two different requests are different, and every data item needs an unit time $u_t$ to be broadcast. Let $loc_{min}^i$ and $loc_{Max}^i$ be the start time and finish time of $q_i$, respectively.
	
	\textbf{Question:} Does there exist a mapping $\mathcal{M}: Q\times D\rightarrow \{1,\dots,|C|\}\times \{1,\dots,L\}$ such that
	\begin{enumerate}
		\item For two data items $d_i^j$ and $d_i^{j'}$ associated with $q_i$, $|p_j^i-p_{j'}^i|> 1$; and
		\item $\sum_{i=1}^{n}acc(q_i)/n \leq h$, where $acc(q_i)=|loc_{Max}^i-loc_{min}^i|$.
	\end{enumerate}	
\end{definition}

To show that the CDBML$\rho$ problem is NP-complete, we reduce the \emph{Minimizing Mean flow time in Unit time Open Shop (MMUOS)} scheduling~\cite{Gonzalez1982} problem with preemption ($O|p_{i,j}\in\{0,1\};pmtn|\Sigma C_{i}$) to the CDBML$\rho$ problem. \cite{Gonzalez1982} has proved such a problem ($O|p_{i,j}\in\{0,1\};pmtn|\Sigma C_{i}$) is $\mathcal{NP}$-hard by the reduction from the \emph{graph coloring} problem, and thus the CDBML problem is $\mathcal{NP}$-hard. The MMUOS problem is defined as follows.
\begin{definition}[MMUOS problem]
	\label{mmuos}
	\textbf{Instance:} Given $m$ machines, a set of $n$ jobs $J=\{J_1,J_2,\dots,J_n\}$, a set of $|O|$ unit operations $O=\{o_1,o_2,\dots,o_{|O|}\}$, and an integer $T$. Each job $J_i$, $1\leq i\leq n$, consists of $m$ unit operations $o_i^j$, where $o_i^1,o_i^2,\dots,o_i^m$. The $j_{th}$ operation, $1 \leq j \leq m$, has to be processed on the $j_{th}$ machine. Job $J_i$ will be processed in a window defined by a release time $r_i$ and a finish time $c_i$.
	
	\textbf{Question:} Does there exist a mapping $\mathcal{M}: Q\times D\rightarrow \{1,\dots,|C|\}\times \{1,\dots,L\}$ such that
	\begin{enumerate}
		\item For operations $o_i^j$ and $o_i^{j'}$ in job $J_i$, $\mathcal{M}(o_i^j)\neq \mathcal{M}(o_i^{j'})$; and
		\item $\sum_{i=1}^{n}C_i/n \leq T$, where $C_i=|c_i-r_i|$.
	\end{enumerate}	
\end{definition}

\begin{thm}[]
	\label{optimal_dbca_i}
	The CDBML$\rho$ problem is $\mathcal{NP}$-complete.
\end{thm}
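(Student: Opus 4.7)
The plan is to verify CDBML$\rho$ is in $\mathcal{NP}$ and then polynomially reduce MMUOS (shown $\mathcal{NP}$-hard by Gonzalez 1982) to it. Membership is routine: a candidate mapping $\mathcal{M}$ fits in $O(|C|L)$ space, and one can check in polynomial time (i) that every pair of items of the same request on distinct channels has position gap strictly greater than one, and (ii) that the mean of $|loc_{Max}^i-loc_{min}^i|$ does not exceed $h$.

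For $\mathcal{NP}$-hardness, given an MMUOS instance with $m$ machines, $n$ jobs, and bound $T$, I would build a CDBML$\rho$ instance by setting $|C|=m$, introducing one request $q_i$ for every job $J_i$ and one data item $d_j^i$ for every unit operation $o_i^j$, taking cycle length $L$ to be at least twice the MMUOS horizon, and choosing the threshold $h=2T$. The data-disjointness requirement of CDBML$\rho$ is automatic because each operation is private to its job. The nontrivial step is reconciling open-shop feasibility (which only forbids two operations of the same job from sharing a time slot) with the CDBML$\rho$ conflict-avoidance constraint (which also forbids placements in adjacent slots of distinct channels); I would handle this with a time-doubling trick. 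Given an MMUOS schedule that places $o_i^j$ at integer time $t$, the induced CDBML$\rho$ mapping places $d_j^i$ at slot $2t$ on channel $j$ and leaves every odd-indexed slot empty. Because any two operations of the same job sit at distinct integer times in MMUOS, their doubled images differ by at least $2$, so conflict-avoidance is honored; moreover $acc(q_i)$ in the CDBML$\rho$ mapping is exactly twice the corresponding flow time of $J_i$, which yields the forward implication $\sum_i C_i/n\le T \Rightarrow \sum_i acc(q_i)/n \le 2T$.

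For the converse, a conflict-free CDBML$\rho$ mapping is compressed back to an open-shop schedule by listing, on each channel separately, the occupied slots in order and reindexing the $r$th such slot as MMUOS time $r$; conflict-avoidance guarantees that no two operations of the same job collide in the compressed schedule. This backward direction is where I expect the main difficulty to lie: because an adversarial CDBML$\rho$ mapping may spread the items of a single request across widely separated slots on distinct channels, I must argue that per-channel compression does not inflate the mean flow time beyond $T$ given only that the mean $acc(q_i)$ is at most $2T$. Aligning the two objectives cleanly --- and possibly supplementing the construction with padding data items that force a dense layout so compression is essentially forced to preserve relative positions --- will absorb most of the careful bookkeeping. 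Once that is settled, the reduction is manifestly polynomial in $m$, $n$, and $L$, completing the proof.
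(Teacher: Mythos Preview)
Your overall plan matches the paper's: both reduce from MMUOS to CDBML$\rho$ via a time-dilation device. The paper uses a factor of $3$ (declaring the MMUOS unit time to be $u_t'=3u_t$, so the threshold remains $h=T$ in the rescaled units) rather than your factor of $2$ with $h=2T$; the forward direction is the same idea and your version of it is fine.

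Where you depart from the paper is the converse. The paper does \emph{not} attempt any per-channel compression; it simply asserts that the construction is a one-to-one mapping of variables and that the time rescaling makes the MMUOS distinctness constraint and the CDBML$\rho$ constraint $|p_j^i-p_{j'}^i|>1$ ``equivalent,'' so a solution to one is directly a solution to the other. Your compression route, by contrast, has a concrete obstacle beyond the bookkeeping you already flag: in MMUOS, operation $o_i^j$ is bound to machine $j$, whereas in CDBML$\rho$ the item $d_j^i$ may be placed on \emph{any} channel; per-channel reindexing therefore cannot, in general, send $d_j^i$ to machine $j$ without colliding with some $d_j^{i'}$ that landed on a different channel but received the same reindexed time. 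The paper sidesteps this entirely by (implicitly) identifying channel $j$ with machine $j$ and treating the dilated time grid as the common coordinate system for both problems. If you want a rigorous converse along your lines, you would need to argue that any CDBML$\rho$ solution meeting the bound can be normalized, without increasing cost, to one that sits on the dilated lattice and respects the channel-to-machine identification, rather than trying to compress an arbitrary solution after the fact.
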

\begin{proof}
	It is easy to see that the CDBML$\rho$ problem is in $\mathcal{NP}$, since validating the existence of an given conflict-free schedule simply needs polynomial time. In order to prove the CDBML$\rho$ problem is $\mathcal{NP}$-hard, a reduction from the MMUOS problem can be made. Suppose that $I'$ is an instance
	of the MNUOS problem. A corresponding instance $I$ of the CDBML$\rho$ problem can be constructed from $I'$ as follows.
	\begin{enumerate}
		\item An unit operation time is equal to the unit time slot to broadcasting a data item.
		\item Let a job $J_i$ correspond to a request $q_i$, $1\leq i\leq m$ and operations $o_i^j$ in $J_i$ be the data item $d_i^j$ associated with $q_i$.
		\item Let $m$ machines be the $|C|$ data broadcasting channels (i.e., $m=|C|$).		
		\item Let $J_i$'s release time $r_i$ be $q_i$'s start time $loc_{min}^i$ in the schedule.
		\item Let $J_i$'s finish time $c_i$ be $q_i$'s finish time $loc_{Max}^i$ in the schedule.
		\item Let integer $T$ be the integer $h$ in CDBML$\rho$ problem.
		\item Let the unit time $u_t'$ in MMUOS problem be three times of $u_t$ in CDBML$\rho$ problem ($u_t'=3*u_t$).
	\end{enumerate}
	According to the last step of the construction, the first objective of MMUOS problem can be equivalent to the first objective of CDBML$\rho$ problem and the above construction can be done in polynomial time. It is straightforward to show that there is a solution for an instance $I'$ of the MMUOS problem if and only if there is a solution for the instance $I$ of the CDBML$\rho$ problem since the reduction is a one-to-one mapping for the variables from the MMUOS problem to the CDBML$\rho$ problem. Hence, the CDBML$\rho$ problem is $\mathcal{NP}$-complete.
\end{proof}

Thus, we can conclude the following theorem.
\begin{thm}[]
	\label{optimal_dbca}
	The CDBML problem is $\mathcal{NP}$-complete.
\end{thm}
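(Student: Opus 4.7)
The plan is to leverage Theorem~\ref{optimal_dbca_i} essentially for free, since CDBML$\rho$ is nothing more than the restriction of CDBML to inputs in which every request asks for exactly $|C|$ distinct data items (and data items are disjoint across requests). First I would make explicit the decision version of CDBML, parameterized by an integer bound $h$: does there exist a conflict-free mapping $\mathcal{M}:Q\times D\rightarrow \{1,\dots,|C|\}\times\{1,\dots,L\}$ with $\sum_{i=1}^{n}acc(q_i)/n \leq h$? This is the form in which NP-completeness is meaningful, and the optimization version of Definition~\ref{cfdbml} is polynomially equivalent to it by the standard threshold trick.

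Membership in $\mathcal{NP}$ is straightforward. Given a candidate mapping $\mathcal{M}$, a verifier checks in polynomial time that (i) every $d_j^i$ is placed at some $(c_j^i,p_j^i)$ with $1\leq c_j^i\leq |C|$ and $1\leq p_j^i\leq L$, (ii) for every request $q_i$ and every pair $d_j^i,d_{j'}^i$ with $c_j^i\neq c_{j'}^i$ we have $|p_j^i-p_{j'}^i|>1$, and (iii) the average of $|loc_{Max}^i-loc_{min}^i|$ over $i=1,\dots,n$ is at most $h$. Each check touches only $O(n\cdot|D|)$ entries, so the whole verification is polynomial in the input size.

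For hardness, I would give a trivial Karp reduction from CDBML$\rho$ to CDBML. Take any instance $I'$ of CDBML$\rho$, consisting of $|C|$ channels, cycle length $L$, data set $D$, request set $Q$ in which each $q_i$ asks for exactly $|C|$ pairwise disjoint data items, and a threshold $h$; output the same tuple as an instance $I$ of the CDBML decision problem. Because CDBML only requires $1\leq k\leq |D|$ data items per request, $I$ is syntactically a valid CDBML instance, and the two objectives (conflict-freeness and average access time bounded by $h$) are literally identical in the two problems. Hence a certifying mapping $\mathcal{M}$ for $I$ exists if and only if a certifying mapping for $I'$ exists, the reduction is the identity and thus polynomial, and Theorem~\ref{optimal_dbca_i} transfers NP-hardness from CDBML$\rho$ to CDBML.

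The ``main obstacle'' here is really just recognizing that no new combinatorial argument is required: CDBML is NP-hard because it already contains the NP-hard special case CDBML$\rho$, and CDBML is in $\mathcal{NP}$ by inspection. The only care to take is to be explicit about the decision-version formulation and to verify that a CDBML$\rho$ instance is syntactically admissible as a CDBML instance, so that the identity reduction is well-defined. Combining NP-membership with the identity reduction then yields the theorem.
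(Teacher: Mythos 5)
Your proposal is correct and follows essentially the same route as the paper: the paper derives Theorem~\ref{optimal_dbca} directly from Theorem~\ref{optimal_dbca_i} by observing that CDBML$\rho$ is a restriction of CDBML, which is exactly your identity reduction. Your write-up is in fact more careful than the paper's (which offers no explicit proof), since you make the threshold-based decision version and the NP-membership check explicit.
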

%

\section{Frequent Pattern based Broadcast Scheduling}
\label{FPBS}
In this section, we propose an approach, the \emph{Frequent Pattern based Broadcast Scheduling} (FPBS), to shorten the average access time per user for the DBCA problem. In FPBS, we construct a new tree with the frequent patterns of queries. This tree is named as FP*-tree. FPBS includes four stages: (\uppercase\expandafter{\romannumeral1}) sorting requested data items, (\uppercase\expandafter{\romannumeral2}) constructing the FP*-tree's backbone, (\uppercase\expandafter{\romannumeral3}) constructing the FP*-tree's accelerating branches, and (\uppercase\expandafter{\romannumeral4}) schedule mapping. In the following, the proposed method will be introduced with a running example in detail.

\subsection{Stage \uppercase\expandafter{\romannumeral1}: Sorting Requested Data Items}
We consider a running example which uses two data broadcasting channels $c_1$, $c_2$ and an additional index channel $c_{\text{index}}$. The data dissemination server receives five queries $q_1=\{d_2,d_5,d_7\}$, $q_2=\{d_2,d_3,d_4\}$, $q_3=\{d_1,d_3,d_6\}$, $q_4=\{d_1,d_3,d_4,d_5\}$, $q_5=\{d_2,d_5,d_8\}$ and then derives the access frequency $f_{d_j}$ of each data item $d_j$ in these queries. After that, the server sorts all the data items in each query according to the descending order of their access frequencies and also derives the statistical average access frequency $f_{q_i}$ of each query $q_i$
For example, $f_{q_1}=(f_{d_2}+f_{d_5}+f_{d_7})/|q_1|=(3+3+1)/3=2.33$. Hence, the final result is presented in Table~\ref{sorting_data_items}.

The detailed process, $\FuncName{FPBS\_StatisticAndSort}(Q)$, for the first stage is presented in Algorithm~\ref{alg:fpbs:sort}. Line~\ref{alg:fpbs:sort:line1} and Line~\ref{alg:fpbs:sort:line2} analyze the received query set $Q$, derive the statistical information, and save it as a temporary set $S$. The operations from Line~\ref{alg:fpbs:sort:sortRequiredDataByFrequence:start} to Line~\ref{alg:fpbs:sort:sortRequiredDataByFrequence:end} sort every requested data item of each query according to the access frequency of the data item. As the example shown in Table~\ref{sorting_data_items}, the orders of requested data items in queries $q_4$ and $q_5$ change after the sorting. Line~\ref{alg:fpbs:sort:line6} and Line~\ref{alg:fpbs:sort:line7} respectively store the results in two lists, $list_{SortedWithSize}$ and $list_{SortedWithFre}$, in different orders. Finally, the process returns these two lists at Line~\ref{alg:fpbs:sort:end} for the use in following stages.

\begin{algorithm2e}[!t]
	\footnotesize
	\SetAlgoLined
	\Fn{$\FuncName{FPBS\_StatisticAndSort}(Q)$}{
        \KwIn{a set of queries (clients) $Q$}
        \KwOut{two lists of sorted queries with sorted requested data, $list_{SortedWithSize}$, $list_{SortedWithFre}$}
	    create a temporary set $S\leftarrow \phi$\label{alg:fpbs:sort:line1}\;
	    $S\leftarrow$ StatisticDataFrequency($Q$)\label{alg:fpbs:sort:line2}\;
	    \ForEach{query $q$ in $Q$\label{alg:fpbs:sort:sortRequiredDataByFrequence:start}}{
	        sortRequiredDataByFrequency($q$, $S$)\;
	    }\label{alg:fpbs:sort:sortRequiredDataByFrequence:end}
	    $list_{SortedWithSize}\leftarrow$ sortQuerySetByQuerySize($S$)\label{alg:fpbs:sort:line6}\;
	    $list_{SortedWithFre}\leftarrow$ sortQuerySetByAverageFrequency($S$)\label{alg:fpbs:sort:line7}\;
	    \Return $list_{SortedWithSize}$, $list_{SortedWithFre}$\label{alg:fpbs:sort:end}\;
	}
	\caption{Deriving the statistical information and sorted result}
	\label{alg:fpbs:sort}
\end{algorithm2e}

\begin{table}[!ht]
	\caption{The Sorted Result of Requested Data Items}
	\label{sorting_data_items}
	\centering
	\begin{tabular}{llll}
		\hline
		\textbf{Query} & \textbf{Requested data items} & \textbf{Sorted result} & $\bm{f_{q_i}}$\\
		\hline
		$q_1$ & $d_2$, $d_5$, $d_7$ & $d_2$, $d_5$, $d_7$ & 2.33\\
		$q_2$ & $d_2$, $d_3$, $d_4$ & $d_2$, $d_3$, $d_4$ & 2.67\\
		$q_3$ & $d_2$, $d_5$, $d_8$ & $d_2$, $d_5$, $d_8$ & 2.33\\
		$q_4$ & $d_1$, $d_3$, $d_4$, $d_5$ & $d_3$, $d_5$, $d_1$, $d_4$ & 2.5\\
		$q_5$ & $d_1$, $d_3$, $d_6$ & $d_3$, $d_1$, $d_6$ & 2\\
		\hline
	\end{tabular}
\end{table}

\subsection{Stage \uppercase\expandafter{\romannumeral2}: Constructing the FP*-tree's Backbone}
After deriving some statistical information and the sorting result in Table~\ref{sorting_data_items}, the system starts to create the backbone of a FP*-tree. In this stage, the system will always select the query which requests the most number of data items to be inserted into the FP*-tree in advance. If there are multiple queries which request the same number of data items, the system will select the one which has the maximum average access frequency $f_{q_i}$. Thus, the system select $q_4$ as the first query to construct the backbone of a FP*-tree and the result is shown in Fig.~\ref{fig:fpbs}\subref{fig:fpbs:q4}. After adding $q_4$ to the FP*-tree, the system will update the statistical information of unhandled queries, as shown in Table~\ref{info_after_add_q4}.

After updating the statistical information, the system will select the next query to handle in the same way. In the previously mentioned, both $q_1$ and $q_3$ request 2 data items so the system will compare the remaining average access frequencies of $q_1$ and $q_3$ ($f_{q_1}=2$, $f_{q_3}=2$) and both values are the same. Then the query which comes into the system first will be selected, so $q_1$ becomes the next one in this step. Note that the numbering of $q_1$ is smaller than $q_3$'s and it means that $q_1$ comes into the system earlier. Thus the handling priority of the remaining queries is $q_1\rightarrow q_3\rightarrow q_5$. While adding data item $d_2$ into the FP*-tree, the system needs to consider the relations between $d_2$ and the other queries. In this case, $q_2$ and $q_3$ also request the data item $d_2$. The system then checks the other data items which are in the request list of both queries and have been added into the FP*-tree. Since the level of $d_4$ is larger than $d_3$'s level, the system will insert $d_2$ as $d_4$'s child. Such a way can avoid increasing the access time of $q_4$ which has been handled. After handling $d_2$, the system handles $d_7$ in the same way and the result of FP*-tree is shown in Fig.~\ref{fig:fpbs}\subref{fig:fpbs:q1}. The system then updates the statistical information which is presented in Table~\ref{info_after_add_q1}.

\begin{figure}[!t]
	\centering
	\subfigure[Add $q_4$]{
		\label{fig:fpbs:q4} 
		\includegraphics[width=0.103 \columnwidth]{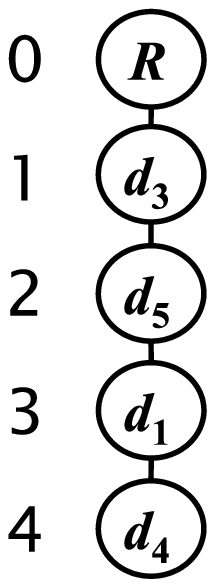}}\hspace{0.035\columnwidth}
	\subfigure[Add $q_1$]{
		\label{fig:fpbs:q1} 
		\includegraphics[width=0.105 \columnwidth]{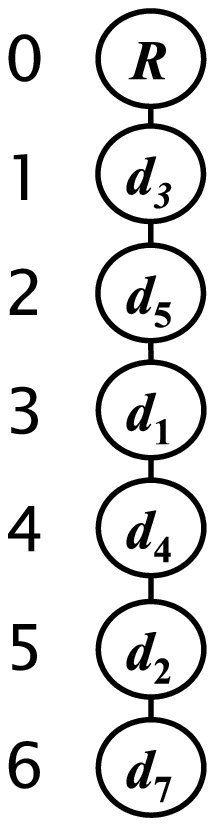}}\hspace{0.025\columnwidth}
	\subfigure[Add $q_3$]{
		\label{fig:fpbs:q3} 
		\includegraphics[width=0.144 \columnwidth]{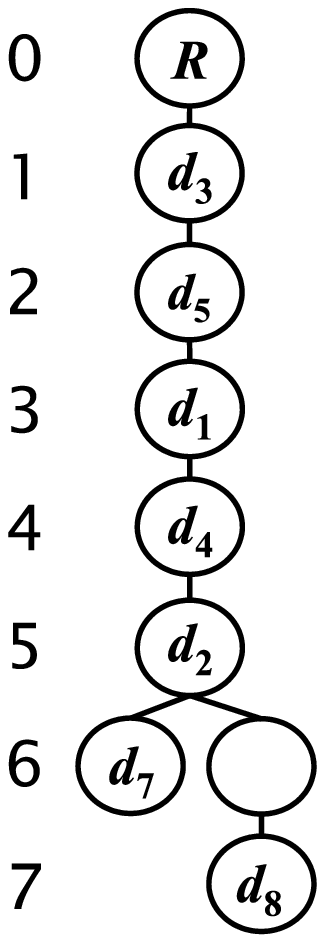}}\hspace{0.005\columnwidth}
	\subfigure[Add $q_5$]{
		\label{fig:fpbs:q5} 
		\includegraphics[width=0.186 \columnwidth]{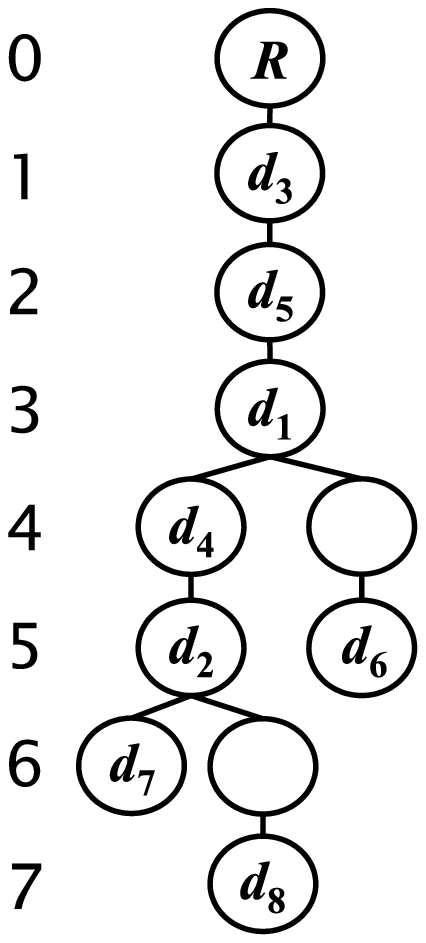}}
	\caption{Constructing the backbone of a FP*-tree step-by-step: \subref{fig:fpbs:q4} add $q_4$, \subref{fig:fpbs:q1} add $q_1$, \subref{fig:fpbs:q3} add $q_3$, and \subref{fig:fpbs:q5} add $q_5$.}
	\label{fig:fpbs} 
\end{figure}

Next query which will be handled is $q_3$. Since there are no other queries relating to the requested data item $d_8$, the system needs to add $d_8$ after $d_2$ according to the order of $q_3$'s requested list. However, $d_2$ is also requested by $q_1$ and thus $d_2$ already has one branch and the position is occupied by $d_7$. Therefore, $d_8$ can only be scheduled in the level (time slot) after $d_2$ and $d_7$. In this case, the system creates a new branch of $d_2$ and inserts an empty node between $d_2$ and $d_8$. Note that an empty node is a node without saving any data item. After handling $q_3$, the results are shown in Fig.~\ref{fig:fpbs}\subref{fig:fpbs:q3}. The last query is $q_5$ and there are no other queries relating to $d_6$. Hence, the system has to add $d_6$ after $d_1$ according to the order of $q_5$'s requested list. However, $d_1$ is also requested by $q_4$ and thus $d_6$ needs to be scheduled after $d_4$. In this case, the system creates a new branch of $d_1$ and inserts an empty node between $d_1$ and $d_6$. Finally, the construction of FP*-tree's backbone is finished and the result is shown in Fig.~\ref{fig:fpbs}\subref{fig:fpbs:q5}.

\begin{table}[!t]
	\caption{Updated Result After Handling $q_4$}
	\label{info_after_add_q4}
	\centering
	\begin{tabular}{lp{0.2\columnwidth}p{0.35\columnwidth}l}
		\hline
		\textbf{Query} & \textbf{Unhandled data items} & \textbf{Items added in FP*-tree's backbone} & $\bm{f_{q_i}}$\\
		\hline
		$q_1$ & $d_2$, $d_7$ & $d_5$ & 2\\
		$q_2$ & $d_2$ & $d_3$, $d_4$ & 3\\
		$q_3$ & $d_2$, $d_8$ & $d_5$ & 2\\
		$q_4$ & $\emptyset$ & $d_3$, $d_5$, $d_1$, $d_4$ & 0\\
		$q_5$ & $d_6$ & $d_3$, $d_1$ & 1\\
		\hline
	\end{tabular}
\end{table}
\begin{table}[!t]
	\caption{Updated Result After Handling $q_1$}
	\label{info_after_add_q1}
	\centering
	\begin{tabular}{lp{0.2\columnwidth}p{0.35\columnwidth}l}
		\hline
		\textbf{Query} & \textbf{Unhandled data items} & \textbf{Items added in FP*-tree's backbone} & $\bm{f_{q_i}}$\\
		\hline
		$q_1$ & $\emptyset$ & $d_5$, $d_2$, $d_7$ & 0\\
		$q_2$ & $\emptyset$ & $d_3$, $d_4$, $d_2$ & 0\\
		$q_3$ & $d_8$ & $d_5$, $d_2$ & 1\\
		$q_4$ & $\emptyset$ & $d_3$, $d_5$, $d_1$, $d_4$ & 0\\
		$q_5$ & $d_6$ & $d_3$, $d_1$ & 1\\
		\hline
	\end{tabular}
\end{table}

Algorithm~\ref{alg:fpbs:backbone} presents two functions for the backbone construction. $\FuncName{FPBS\_CreateBackbone}(S)$ describes the main process of an FP*-tree's backbone construction and $\FuncName{FPBS\_AddNodeForBackbone}(\mathcal{T},N_{p},d)$ is the function of adding a node during the backbone construction. From Line~\ref{alg:fpbs:backbone:line2} to Line~\ref{alg:fpbs:backbone:line5}, the operations initialize an empty FP*-tree $\mathcal{T}$ and create a sorted query table $Q_{table}$ with the derived sorted result in the stage \uppercase\expandafter{\romannumeral1}. The operations from Line~\ref{alg:fpbs:backbone:firstq:start} to Line~\ref{alg:fpbs:backbone:firstq:end} handle each requested data item of the first query in the sorted query set. The first query is the most important and has maximum number of requested data items. As shown as the above example in Fig.~\ref{fig:fpbs}\subref{fig:fpbs:q4}, the query $q_4$ is the first to be handled. At Line~\ref{alg:fpbs:backbone:line9}, the remaining information of unhandled queries and data items in the query table $Q_{table}$ will be updated. From Line~\ref{alg:fpbs:backbone:line10} to Line~\ref{alg:fpbs:backbone:line17}, the operations continuous inserting the unhandled data items of $Q_{table}$ into the backbone of $\mathcal{T}$. At Line~\ref{alg:fpbs:backbone:line13}, the operation finds the right position of $\mathcal{T}$'s backbone to insert the unhandled data item with the consideration of query dependency and the order of data items. The operations from Line~\ref{alg:fpbs:backbone:addnode:start} to Line~\ref{alg:fpbs:backbone:addnode:end} presents the detailed process of adding a data node to the backbone of $\mathcal{T}$. Note that the operation, $\mathcal{T}.isOverload(N_{temp}.slot+1)$, at Line~\ref{alg:fpbs:backbone:overload} is used to avoid scheduling data items out of $|C|$ data broadcasting channels. Fig.~\ref{fig:fpbs:q3} and Fig.~\ref{fig:fpbs:q5} are the running examples for such operations.
\begin{algorithm2e}[!t]
	\scriptsize
	\SetAlgoLined
	\Fn{$\FuncName{FPBS\_CreateBackbone}(S)$}{
	    \KwIn{a sorted set of queries (clients) $S$}
	    \KwOut{a basic FP*-tree $\mathcal{T}$}
	    create a empty FP*-tree $\mathcal{T}$ and the root $R$ of $\mathcal{T}$\label{alg:fpbs:backbone:line2}\;
	    set $S$ into a query table $Q_{table}$\;
	    let $q\leftarrow$ $S$.first()\;
	    let a temporary pointer $N_{curr}\leftarrow R$\label{alg:fpbs:backbone:line5}\;
	    \ForEach{requested data $d$ in $q$\label{alg:fpbs:backbone:firstq:start}}{
		    $N_{curr}\leftarrow$ $\FuncName{FPBS\_AddNodeForBackbone}(\mathcal{T},N_{curr},d)$\;
	    }\label{alg:fpbs:backbone:firstq:end}
	    update $Q_{table}$\label{alg:fpbs:backbone:line9}\;
	    \While{$Q_{table}$ contains any unhandled required data}{
	    \label{alg:fpbs:backbone:line10}
	        $q_{un}\leftarrow$ the query with the maximum number of unhandled data items in $Q_{table}$\;
	        \ForEach{unhandled requested data $d'$ in $q_{un}$}{
		        $N_{d\_p}\leftarrow$ find the other queries which also needs data $d'$ and then choose one of the handled data nodes whose slot is maximum in $\mathcal{T}$\label{alg:fpbs:backbone:line13}\;
                $\FuncName{FPBS\_AddNodeForBackbone}(\mathcal{T},N_{d'\_p},d')$;
            }
	        update $Q_{table}$\;
	    }\label{alg:fpbs:backbone:line17}
	    \Return $\mathcal{T}$\label{alg:fpbs:backbone:createbackbone:end}\;
	}
	
	\Fn{$\FuncName{FPBS\_AddNodeForBackbone}(\mathcal{T},N_{p},d)$}{
	    \KwIn{an FP*-tree $\mathcal{T}$, the parent node $N_{p}$, and a new data item $d$}
	    \KwOut{an added node $N_d$}
        create a new node $N_d$ with data item $d$\label{alg:fpbs:backbone:addnode:start}\;
	    \uIf{$N_{p}$ has children}{
	        create an empty node $N_{e}$\;
	        $N_{p}$.addChild($N_{e}$)\;
	        let a temporary pointer $N_{temp}\leftarrow N_{e}$\label{alg:fpbs:backbone:line25}\;
            \While{$\mathcal{T}$.isOverload($N_{temp}$.slot+1)}{\label{alg:fpbs:backbone:overload}
                create an empty node $N_{e}$\;
	            $N_{p}$.addChild($N_{e}$)\;
	            $N_{temp}\leftarrow N_{e}$\;
            }
            $N_{temp}$.addChild($N_d$)\;
	    }
	    \Else{
            $N_p$.addChild($N_d$)\;
	    }
	    \Return $N_d$\label{alg:fpbs:backbone:addnode:end}\;
	}
	
	\caption{Functions used for the FP*-tree's backbone construction}
	\label{alg:fpbs:backbone}
\end{algorithm2e}

\subsection{Stage \uppercase\expandafter{\romannumeral3}: Constructing the FP*-tree's accelerating branches}
After the construction of FP*-tree's backbone, the system starts to create the accelerating branches to optimize the schedule. The purpose of constructing the accelerating branch is to increase the chance of each user getting the requested data item earlier after switching channels. 

In this stage, we propose two different ordering rules, \emph{Request-Number-First} and \emph{Frequency-First}, to insert data items in the FP*-tree's accelerating branches. The priority of a query for the insertion of FP*-tree is decided by following values: number of requested data items, average access frequency, and arrival time. With Request-Number-First rule, the system will select the query which requests the maximum number of data items to handle first. If multiple queries request the maximum number of data items, the system will select the one of them that has the maximum average access frequency. If multiple queries has the maximum average access frequency unfortunately, the system will select the query according to its arrival order. Conversely, with Frequency-First rule, the system will first select the query which has the maximum average access frequency. If multiple queries has the maximum average access frequency, the system will select the one of them that requests the maximum number of data items. If multiple queries requests the maximum number of data items unfortunately, the system will select the query according to its arrival order. Note that the construction of the FP*-tree's backbone always follows Request-Number-First rule in our design. The system can use different rules only when constructing accelerated branches of the FP*-tree. 

Since different orders of handling queries and data items make the process constructs different accelerating branches of FP*-trees, we will compare the performance results of different schedules generated by using different rules. By default, the system uses Frequency-First rule to select the query for constructing the FP*-tree's accelerating branches. Due to limitations on space and the similar process, we only introduce the proposed approach with Frequency-First in detail. In this example, the system follows Frequency-First rule and gets the following handling sequence, $q_2\rightarrow q_4\rightarrow q_1\rightarrow q_3\rightarrow q_5$. Note that the value of $f_{q_i}$ is shown in Table~\ref{sorting_data_items}.

The system first handles query $q_2$ and $q_2$'s sorted requested data items are $d_2$, $d_3$, and $d_4$. Hence, the system sequentially schedules $d_2$, $d_3$, and $d_4$. When scheduling $d_2$, the system temporarily inserts $d_2'$ into level (or slot) 1 and the position is a right child of the root. Then the system searches $d_2$ in the backbone and check whether $p_2>p_2'$ and $p_2-p_2'>1$ or not. In this case, $p_2>p_2'$ and $p_2-p_2'=4>1$ is hold, so $d_2$ can be inserted into the position of $d_2'$. For the next requested data item $d_3$, the system inserts $d_3'$ after $d_2$ in the accelerating branch and then checks whether the position is legal or not in the same way. In this case, $d_3$ can be inserted into the position of $d_3'$. For the last requested data item $d_4$ by query $q_2$, the system tries to temporarily insert $d_4'$ after $d_3$ in the accelerating branch. However, the system can find $d_4$ in the backbone that $p_4-p_4'\leq 1$. Thus, $d_4$ can not be inserted into the accelerating branch. After handling $q_2$, the result of FP*-tree is shown in Fig.~\ref{fig:fpbs:ac}\subref{fig:fpbs:ac:q2}.

For the next query $q_4$, the system will do nothing in the accelerating branch. The reason is that $q_4$ is the first query handled in the backbone and the schedule, $d_3\rightarrow d_5\rightarrow d_1\rightarrow d_4$, has been optimized. Go on the next step, $q_1$ is going to be handled and $q_1$'s requested data items are $d_2$, $d_5$, and $d_7$. Since $d_2$ has been inserted into the accelerating branch, the system skips $d_2$ and tries to insert $d_5$ in this step. According to the order of $q_1$'s requested list, $d_5$ needs to be inserted after $d_2$. In the accelerating branch, node $d_2$ already has a child, so the system creates a new branch of $d_2$, inserts an empty node as $d_2$'s right child, and then add temporary $d_5'$ after the empty node. Since there is no $d_5$ whose $p_5>p_5'$ in the backbone, it is legal to insert $d_5$ at the position of $d_5'$. For the last requested data item $d_7$ in $q_1$, $d_7$ is inserted in the same way. The system inserts $d_7'$ after $d_5$ in advance and check whether the backbone contains $d_7$ or not. Sine $p_7>p_7'$ and $p_7-p_7'=2>1$, it is legal to insert $d_7$ at the position of $d_7'$. After handling all the requested data items in $q_1$, the result of FP*-tree is shown in Fig.~\ref{fig:fpbs:ac}\subref{fig:fpbs:ac:q1}.

\begin{figure}[!t]
	\centering
	\subfigure[Add $q_2$]{
		\label{fig:fpbs:ac:q2} 
		\includegraphics[width=0.2417 \columnwidth]{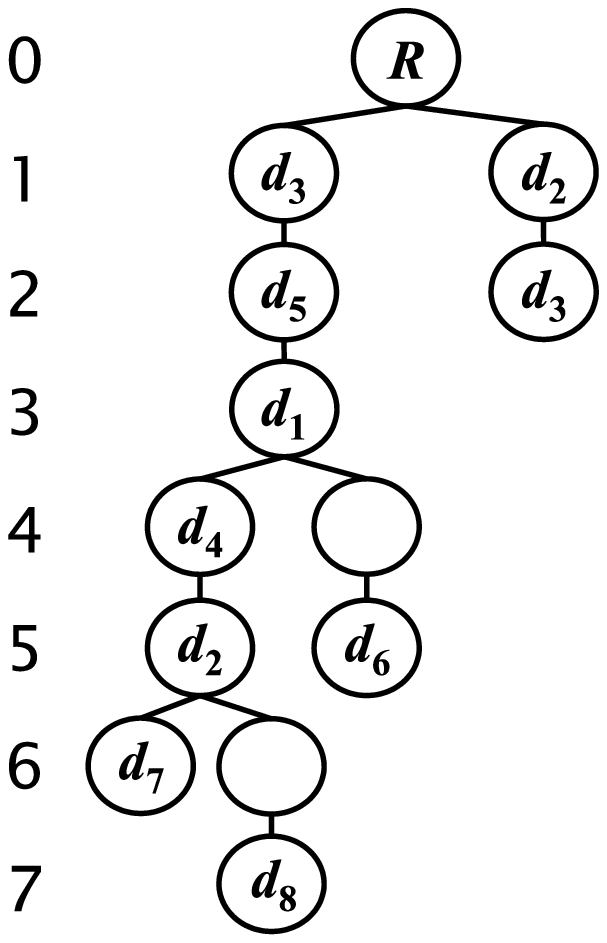}}
	\subfigure[Add $q_1$]{
		\label{fig:fpbs:ac:q1} 
		\includegraphics[width=0.2835 \columnwidth]{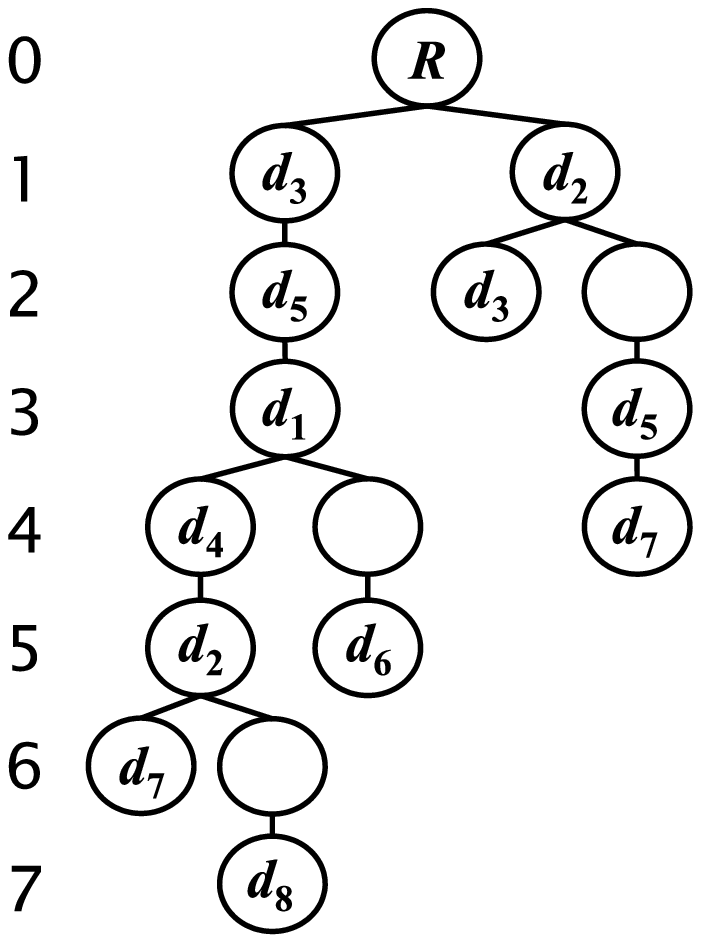}}
	\subfigure[Add $q_5$]{
		\label{fig:fpbs:ac:q5} 
		\includegraphics[width=0.2996 \columnwidth]{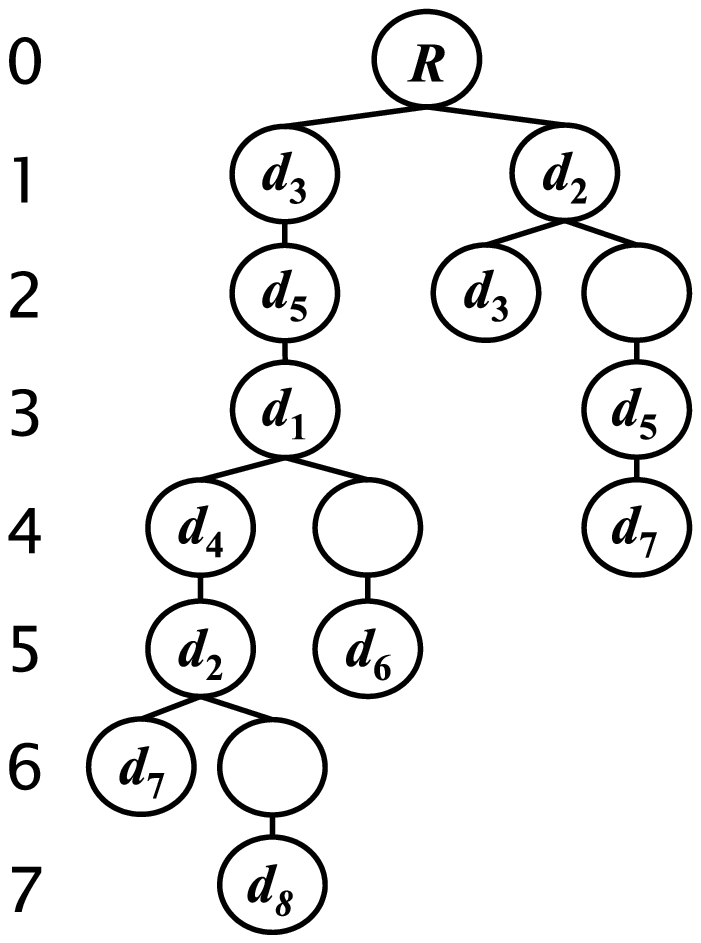}}
	\caption{Constructing the accelerating branch of the FP*-tree step-by-step: \subref{fig:fpbs:ac:q2} add $q_2$, \subref{fig:fpbs:ac:q1} add $q_1$, and \subref{fig:fpbs:ac:q5} add $q_5$.}
	\label{fig:fpbs:ac} 
\end{figure}

After handling $q_1$, the system will start to handle $q_3$. The sorted requested data items are $d_2$, $d_5$, $d_8$. Since $d_2$ has been scheduled at the first slot (level) in the accelerating branch, the system skips $d_2$ in this step. The next data item $d_5$ also has been scheduled in the accelerating branch while handing the previous query $q_1$. Hence, the system only needs to handle $d_8$ for $q_4$. According to the requested list of $q_4$, $d_8$ needs to be inserted at a position that is after $d_2$ and $d_5$. In the accelerating branch, $p_5>p_2$ so that $d_8$ will be inserted under the $d_5$. Since $d_5$ already has a branch, the system creates a new branch of $d_5$, inserts an empty node after $d_5$, and tries to inserts a temporary $d_8'$ after the empty node (at $p_8'=5$). However, $C=2$ and the bandwidth has been occupied by $d_2$ and $d_6$ at slot $p_8'=5$. Then the system will insert an empty node again and try to add a temporary $d_8'$ at position $p_8'=6$. Then the system starts to find $d_8$ in the backbone and check whether $p_8>p_8'$ and $p_8-p_8'>1$ or not. In this case, $p_8-p_8'=1$, so it is illegal to place $d_8$ at the position of $d_8'$ and the system removes all the empty nodes after $d_5$ in the accelerating branch. Hence, the final FP*-tree is shown in Fig.~\ref{fig:fpbs:ac}\subref{fig:fpbs:ac:q5}.
\begin{algorithm2e}[!t]
	\scriptsize
	\SetAlgoLined
	\Fn{$\FuncName{FPBS\_CreateAcceleratingBranch}(\mathcal{T},S)$}{
		\KwIn{an FP*-tree $\mathcal{T}$ and a sorted set of queries (clients) $S$}
		\KwOut{a final FP*-tree $\mathcal{T}$}
		let a temporary pointer $N_{curr}\leftarrow R$\;
		create a temporary list $list_q$ and a temporary node $N_{temp}$\;
		\ForEach{query $q$ in $S$\label{alg:fpbs:acceleratingbranch:firstfor:start}}{
			\ForEach{requested data $d$ in $q$\label{alg:fpbs:acceleratingbranch:secondfor:start}}{
				$N_{temp}\leftarrow \FuncName{FPBS\_AddNodeForAcceleratingBranch}(\mathcal{T},$ $N_{curr}, d)$\label{alg:fpbs:acceleratingbranch:sub1}\;
				$N_{curr}\leftarrow \FuncName{RangeSearch}(\mathcal{T}$, $N_{temp}$)\label{alg:fpbs:acceleratingbranch:sub2}\;
				$list_q$.add($N_{curr}$)\;
				\If{$N_{curr}$.slot $>$ $\mathcal{T}$.slot}{
					delete the path of $list_q$ in $\mathcal{T}$\;
					break\;
				}
			}\label{alg:fpbs:acceleratingbranch:secondfor:end}
			$list_q$.clear()\;
		}\label{alg:fpbs:acceleratingbranch:firstfor:end}
		\Return $\mathcal{T}$\;
	}
	
	\Fn{$\FuncName{FPBS\_AddNodeForAcceleratingBranch}(\mathcal{T},N_{p},d)$}{
		\KwIn{an FP*-tree $\mathcal{T}$, the parent node $N_{p}$, and a new data item $d$}
		\KwOut{an added node $N_d$}
		create a new node $N_d$ with data item $d$\;
		\uIf{$N_{p}$ has children}{
			\uIf{$N_{p}$ has a child $N_d'$ with $d$}{
				\Return $N_d'$;
			}
			\Else{
				create an empty node $N_{e}$\;
				$N_{p}$.addChild($N_{e}$)\;
				let a temporary pointer $N_{temp}\leftarrow N_{e}$\;
				\While{$\mathcal{T}$.isOverload($N_{temp}$.slot+1)}{\label{alg:fpbs:acceleratingbranch:check_overload}
					create an empty node $N_{e}$\;
					$N_{p}$.addChild($N_{e}$)\;
					$N_{temp}\leftarrow N_{e}$\;
				}
				create a new node $N_d$ with data item $d$\;
				$N_{temp}$.addChild($N_d$)\;
			}
		}
		\Else{
			create a new node $N_d$ with data item $d$\;
			$N_p$.addChild($N_d$)\;
		}
		\Return $N_d$\;
	}
	
	\Fn{$\FuncName{FPBS\_RangeSearch}(\mathcal{T},N_{p})$}{
		\KwIn{an FP*-tree $\mathcal{T}$ and a search node $N_d$}
		\KwOut{a result node $N_{d}$ within the search range}	
		\Int{} $Num_e\leftarrow$ the number of $N_d$'s ancestors which are empty\;
		\Int{} $startSlot\leftarrow N_{temp}$.slot $– Num_e + 1$\;
		\Int{} $endSlot\leftarrow N_{temp}$.slot $+ 1$\;
		\For{$i\leftarrow startSlot$ \KwTo $endSlot$\label{alg:fpbs:acceleratingbranch:rangesearch:firstfor:start}}{
			\If{find a node $N_{temp}$ that has the same data as $N_d$ does at level $i$ of $\mathcal{T}$}{
				delete the path that contains $N_d$ and all the empty connected ancestors of $N_d$\label{alg:fpbs:acceleratingbranch:delete_inserted_nodes}\;
				\Return $N_{temp}$\;
			}
		}\label{alg:fpbs:acceleratingbranch:rangesearch:firstfor:end}
		\Return $N_d$\label{alg:fpbs:acceleratingbranch:end}\;
	}
	\caption{Functions used for the construction of the FP*-tree's accelerating branch}
	\label{alg:fpbs:acceleratingbranch}
\end{algorithm2e}

Algorithm~\ref{alg:fpbs:acceleratingbranch} presents the pseudo-codes for the functions of accelerating branch construction. $\FuncName{FPBS\_CreateAcceleratingBranch}(\mathcal{T},S)$ is the main function for constructing accelerating branch. The process calls the sub-function $\FuncName{FPBS\_AddNodeForAcceleratingBranch}(\mathcal{T},N_{curr}, d)$ to insert a data item into the accelerating branch of $\mathcal{T}$ at Line~\ref{alg:fpbs:acceleratingbranch:sub1}. Such a process is similar to the function $\FuncName{FPBS\_AddNodeForBackbone}(\mathcal{T},N_{p},d)$ in the backbone construction. The operation at Line~\ref{alg:fpbs:acceleratingbranch:sub2} calls another sub-function $\FuncName{FPBS\_RangeSearch}(\mathcal{T},N_p)$ to check whether the inserted data item is in the search range (or levels)) or not. The insertion will be illegal if the same data item in the backbone of $\mathcal{T}$ locates at one of search levels. If the insertion is illegal, the inserted nodes (including the data item and empty node(s)) will be deleted at Line~\ref{alg:fpbs:acceleratingbranch:delete_inserted_nodes}.

\subsection{Stage \uppercase\expandafter{\romannumeral4}: Schedule Mapping}
After finishing stage III, the system will map every slot (or level) of FP*-tree into the broadcasting channels using the Breadth-First-Search (BFS) strategy. The final results are shown in Fig.~\ref{fig:fpbs:result}. Note that the maximum number of data items in each slot (level) is the number of channels, $|C|$. The mapping process is described as the operations before Line~\ref{alg:fpbs:mapping:firstwhile:end} in Algorithm~\ref{alg:fpbs:mapping}. From Lines~\ref{alg:fpbs:indexing:start} to~\ref{alg:fpbs:indexing:end}, the process schedules the index items in index channel and the result is shown in Fig.~\ref{fig:fpbs:result}. According to the indexing rule defined in~\eqref{eq1:index_rule}, the index $I_1$ records the information about who requests the data items in slot $3$ and the index $I_6$ records the similar information corresponding to the data items in slot $1$.

Consider the example of the Table I, for the request $q_2=\{d_2,d_3,d_4\}$, the final schedule in Fig.~\ref{fig:fpbs:result} generated by the proposed FPBS shows that the user can retrieve all the requested data items $d_2$, $d_3$ (on $c_2$), and $d_4$ (on $c_1$) within 4 time slots including a channel switching. If there is no accelerating branch, the user needs 5 time slots to retrieve data items $d_2$, $d_3$, and $d_4$ on $c_1$. This result shows that the proposed FP*-tree can indeed reduce the access time.

\begin{algorithm2e}[!t]
	\footnotesize
	\SetAlgoLined
	\Fn{$\FuncName{FPBS\_ScheduleMapping}(\mathcal{T},S,|C|)$}{
	    \KwIn{an FP*-tree $\mathcal{T}$, a sorted query set $S$, and the munber of channels $|C|$}
	    \KwOut{a scheduled channel set $S_{channel}$ and a index channel $I_{channel}$}	
	    let a list $list_{handling}\leftarrow T$.root.children\;
	    let a temporary list $list_{next}\leftarrow \emptyset$\;
	    create a data channel $S_{channel}$ with $|C|$ data broadcasting channels (or rows)\; 
	    create an index channel $I_{channel}\leftarrow \emptyset$\;
	    \Int{} $i$\;
	    \While{$list_{handling}$ is not empty\label{alg:fpbs:mapping:firstwhile:start}}{
	        $i\leftarrow 1$\tcc*[r]{$i$ is used as a pointer to the current channel}
	        \ForEach{node $N$ in $list_{handling}$\label{alg:fpbs:mapping:firstfor:start}}{
	            \uIf{$N$ is an empty node}{
                    break\;
                }
                \uElseIf{$N$.parent is an empty node}{
                    insert $N$ into $S_{channel}$ whose slot $N$.slot is not occupied\;
                }
                \Else{
                    insert $N$ into the $i$th channel\;
                }
                \If{$N$ is not a leaf node}{
                    add $N$'s children into $list_{next}$\;
                }
                $i\leftarrow i+1$\;
            }\label{alg:fpbs:mapping:firstfor:end}
            copy every node of $list_{next}$ to $list_{handling}$\;
            $list_{next}$.clear()\;
	    }\label{alg:fpbs:mapping:firstwhile:end}
    
    	\For{$j\leftarrow 1$ to $\mathcal{T}$.height()\label{alg:fpbs:indexing:start}}{
    		\For{$i\leftarrow 1$ to $|C|$}{
    			Use $S$ to check who requests the data item in the slot determined by~\eqref{eq1:index_rule} and channel $C_i$ of $S_{channel}$ and then update this information to $I_{channel}[j]$\;
    		} 
    	}\label{alg:fpbs:indexing:end}
	    \Return $I_{channel}$, $S_{channel}$\label{alg:fpbs:mapping:end}\;
	}
	\caption{The function used for the schedule mapping}
	\label{alg:fpbs:mapping}
\end{algorithm2e}

\begin{figure}[!ht]
	\centering
	\includegraphics[width=0.9 \columnwidth]{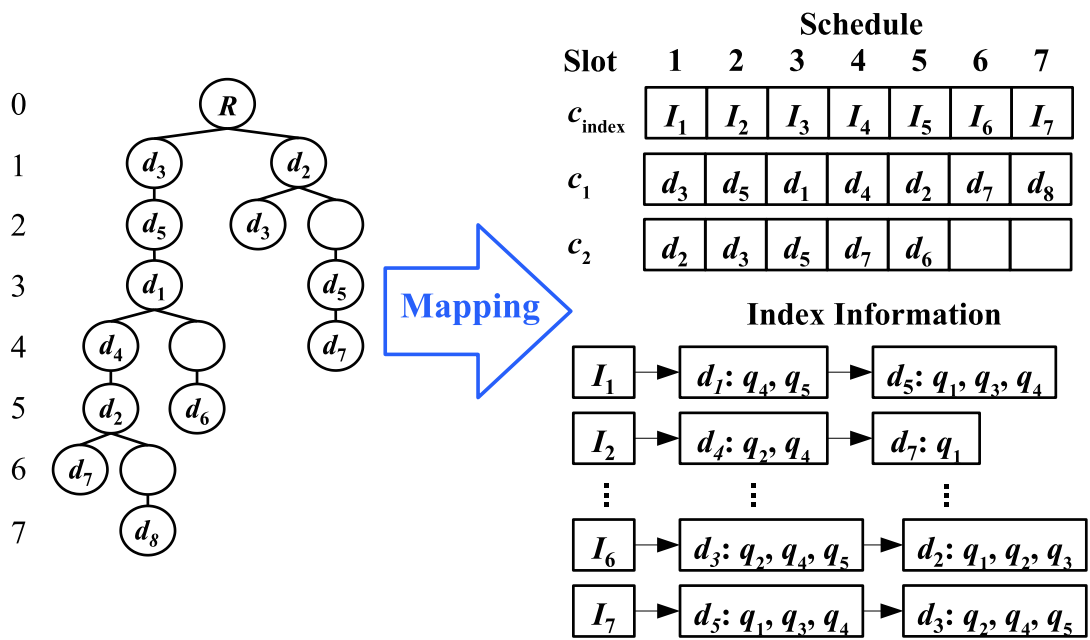}
	\caption{Mapping FP*-tree into the broadcasting channels with indexes.}
	\label{fig:fpbs:result}
\end{figure}


\section{Analysis and Discussion}
\label{analysis}
In this section, we analyze the performance of FPBS in terms of \emph{time complexity}, \emph{space complexity} and \emph{access time}.

\subsection{Time Complexity}
Suppose that the notations are defined as above and the FP*-tree is denoted as $\mathcal{T}$, then the time complexity of the $\mathcal{T}$'s construction will be $\mathcal{O}(nk)$. The idea of FP*-tree design comes up from the FP-tree and only one difference between them is that FP*-tree needs to add an empty node when creating a new branch except for the root node. In the last stage of the proposed method, schedule mapping needs to maps all the data nodes of $\mathcal{T}$ to the broadcasting channels and $|\mathcal{T}|\leq nk$, so the time complexity of schedule mapping is also $\mathcal{O}(nk)$. Due the to nature of the FP*-tree which is evolved from FP-tree, FPBS costs $\mathcal{O}(nk)$ in both average case and worst case. In summary, FPBS provides a polynomial algorithm for solving the DBCA problem. 

\subsection{Space Complexity}
After discuss the time complexity of FPBS, we starts to analyze the space complexity of FPBS. In this part, we only consider the temporary space for FPBS process. In the stage \uppercase\expandafter{\romannumeral1} of FPBS process, the system uses a $\mathcal{O}(nk)$ size table to store the sorted requests and the statistical information. In the stage \uppercase\expandafter{\romannumeral2}, the system uses the obtained sorted table to construct the backbone of an FP*-tree and it also costs $\mathcal{O}(nk)$ space. In the stage \uppercase\expandafter{\romannumeral3}, the system constructs accelerating branches of the FP*-tree and it costs $\mathcal{O}(nk')$ space, where $1\leq k'\leq k$. In the last stage, the system just maps the FP*-tree to the channels and only costs $\mathcal{O}(1)$ additional temporary space for traversing the FP*-tree. That is, the temporary space complexity during the scheduling process is $\mathcal{O}(nk)$. 

\subsection{Access Time}
In wireless data dissemination environments, access time (or latency) is an important metric for validating the efficiency of scheduling. 
In FPBS, the system always first selects the request, whose size and average access frequency are maximum, and then schedules it in the backbone of FP*-tree. We then treat is as the base of schedule. That is, the access time for a request $q_i$ can be formulated as Theorem~\ref{thm:ge_aac}. 

\begin{thm}[]
	\label{thm:ge_aac}
	Suppose that $\mathcal{F}$ is the maximal frequent item-set in the first-scheduled request, $\hat{t}$ is the minimum cost for channel switching, and $\overline{t_{\text{wait}}}$ is the average waiting time from tuning into the channel to receiving the first required data item for a request, the access time for a request $q_i$ can be expressed as
	\begin{equation}\label{eq2:acc_request}
	acc(q_i) = \left\{
	\begin{array}{cl}
	\overline{t_{\text{wait}}}+|q_i|+\sigma_1\hat{t}+\sigma_2, & \mbox{if $(q_i\subseteq\mathcal{F})\vee (q_i\cap\mathcal{F}=\emptyset)$} \\
	\overline{t_{\text{wait}}}+|q_i\cap\mathcal{F}|+|q_i\setminus\mathcal{F}|+\sigma_1\hat{t}+\sigma_2, & \mbox{otherwise}
	\end{array} \right.
	\end{equation}
	where $\sigma_1$ is the frequency of channel switching and $\sigma_2$ is the frequency of occupied slot (empty node in the FP*-tree) skipping.
\end{thm}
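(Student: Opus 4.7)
The plan is to view $acc(q_i)$ as the total elapsed time from the moment the client first tunes into the index channel to the moment it receives its last required data item, and to decompose that elapsed time into four additive and time-disjoint contributions: (a) the wait on the index channel until the first relevant index entry is read, (b) the productive slots in which required data items are actually received, (c) the time spent on channel switches, and (d) the time spent idling through empty or occupied slots on the currently-tuned channel. The indexing rule~\eqref{eq1:index_rule} is what makes this decomposition clean: since each index $I_t$ advertises the content of slot $t+2$, the client can always schedule its next switch before the corresponding data item is broadcast, so a switch never overlaps a productive slot or an idle skip.

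Granting this decomposition, (a) is by definition $\overline{t_{\text{wait}}}$, (c) equals $\sigma_1\hat{t}$ by the definitions of $\sigma_1$ as the number of channel switches and $\hat{t}$ as the per-switch cost, and (d) equals $\sigma_2$ by the definition of $\sigma_2$ as the number of empty-node or occupied-slot skips encountered. All the real work is therefore in counting the productive slots in (b), and the natural way to do this is to case-split on the relation between $q_i$ and the maximal frequent item-set $\mathcal{F}$ of the first-scheduled request, because $\mathcal{F}$ is precisely what Stage~II commits to the backbone of the FP*-tree.

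If $q_i\subseteq\mathcal{F}$, every item of $q_i$ lies on a single backbone path by the backbone construction, so the client consumes exactly $|q_i|$ productive slots. If $q_i\cap\mathcal{F}=\emptyset$, then by Stage~III's \FuncName{FPBS\_RangeSearch} legality check every item of $q_i$ is placed on an accelerating branch at a distinct level, giving $|q_i|$ productive slots once more. In the remaining straddling case, exactly $|q_i\cap\mathcal{F}|$ items are collected from the backbone and $|q_i\setminus\mathcal{F}|$ from accelerating branches, summing to the stated $|q_i\cap\mathcal{F}|+|q_i\setminus\mathcal{F}|$. This is why the theorem writes the two formulas separately even though they agree arithmetically: the split into intersection and difference carries the structural information that in the straddling case the boundary crossings between backbone and branches are exactly what forces $\sigma_1>0$ and $\sigma_2>0$, whereas in the contained or disjoint sub-cases those terms typically collapse.

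The main obstacle will be showing rigorously that contributions (a)--(d) are genuinely disjoint in time, i.e., that no slot is counted twice and none is omitted. For this I would combine two facts baked into the FPBS construction: first, every violation of the conflict condition $|p_j^i-p_{j'}^i|>1$ for $c_j^i\neq c_{j'}^i$ is resolved by inserting an empty node (the \FuncName{isOverload} branch in Algorithm~\ref{alg:fpbs:backbone}), so between two consecutive required items of $q_i$ on different channels there is always at least one idle slot available; second, the two-slot lookahead of~\eqref{eq1:index_rule} guarantees that this idle slot is sufficient for the channel switch to complete before the next required item is broadcast. Together these imply that every non-productive tick of the client's timeline is either a switch counted by $\sigma_1\hat{t}$ or a skip counted by $\sigma_2$, closing the accounting and yielding the claimed equality.
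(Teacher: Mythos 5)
Your proposal is correct and follows essentially the same route as the paper: an additive accounting of the access time into index-channel wait, productive slots, switching cost $\sigma_1\hat{t}$, and skip cost $\sigma_2$, combined with a case split on whether $q_i$ is contained in, disjoint from, or straddles $\mathcal{F}$. The only point where your justification diverges is the disjoint case: the paper argues that $q_i$'s items are simply scheduled in the backbone \emph{after} $\mathcal{F}$, so an extra $|\mathcal{F}|$ delay appears and is then absorbed into $\overline{t_{\text{wait}}}$, whereas you attribute the $|q_i|$ productive slots to placement on accelerating branches via \FuncName{FPBS\_RangeSearch} --- which is not guaranteed (the range check can reject insertions), though your decomposition still yields the right count because contribution (a) already absorbs the delay to the first relevant item.
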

\begin{proof}	
	With the use of index channel in FPBS, the average waiting time can be reduced efficiently. If $q_i\subseteq\mathcal{F}$, it means that all the required data items for $q_i$ can be obtained before the end of broadcasting all the data items in $\mathcal{F}$. In such a case, the access time for $q_i$ will be $\overline{t_{\text{wait}}}+|q_i|+\sigma_1\hat{t}+\sigma_2$, where $|q_i|+\sigma_1\hat{t}+\sigma_2\leq|\mathcal{F}|$. If $q_i\cap\mathcal{F}=\emptyset$ (is equivalent to $|q_i\cap\mathcal{F}|=0$), it means that $q_i$ and $\mathcal{F}$ are two disjoint sets. In this case, the data items requested by $q_i$ only can be allocated after the first-scheduled maximal frequent item-set, so the access time for $q_i$ will be $\overline{t_{\text{wait}}}+|\mathcal{F}|+|q_i|+\sigma_1\hat{t}+\sigma_2$. However, the time $|\mathcal{F}|$ can be merged into the average waiting time $\overline{t_{\text{wait}}}$ until accessing the first data item requested by $q_i$.
	Otherwise, for the case of $|q_i\setminus\mathcal{F}|>0$, $q_i$ and $\mathcal{F}$ are two partially overlapping. It means that some required data items for $q_i$ will be scheduled after $\mathcal{F}$. Hence, the access time for $q_i$ will be $\overline{t_{\text{wait}}}+|q_i\cap\mathcal{F}|+|q_i\setminus\mathcal{F}|+\sigma_1\hat{t}+\sigma_2$, where $|q_i\cap\mathcal{F}|+|q_i\setminus\mathcal{F}|+\sigma_1\hat{t}+\sigma_2\geq|F|$. 
\end{proof}

After discussing the general case of access time, we also discuss the worst case in following Theorem~\ref{thm:worse_aac}.
\begin{thm}[]
	\label{thm:worse_aac}
	Suppose all the notations are defined as above. The worst case of access time will be
	\begin{equation}\label{eq1:tree_height}
	acc_{\text{worst}} = \overline{t_{\text{wait}}}+|\bigcup_{i=1}^{n} q_i|. \nonumber
	\end{equation}
\end{thm}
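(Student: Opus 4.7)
The plan is to use Theorem~\ref{thm:ge_aac} as the starting point and bound every term that contributes to $acc(q_i)$ by a quantity depending only on the distinct data universe $\bigcup_{i=1}^n q_i$, then take the worst request. Because the broadcast server only ever schedules data items that appear in at least one submitted query, the complete pool of distinct items placed anywhere in the FP*-tree (equivalently, on any of the $|C|$ channels) is exactly $\bigcup_{i=1}^n q_i$.

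First, I would reinterpret the non-waiting portion of~\eqref{eq2:acc_request}. Dropping the leading $\overline{t_{\text{wait}}}$, the remaining sum counts three disjoint kinds of slot traversals between $loc_{\min}^i$ and $loc_{\max}^i$: the positions that carry a required item (either inside or outside the maximal frequent item-set $\mathcal{F}$), the slots consumed by channel switching ($\sigma_1 \hat{t}$), and the slots skipped over empty nodes inserted by the FP*-tree construction ($\sigma_2$). Together these tally exactly the span $|loc_{\max}^i - loc_{\min}^i|$ that elapses after the client tunes in to the first required item.

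Next, I would bound this span by the effective length of one broadcasting cycle. Although data replication across channels is allowed, a single channel of the cycle needs at most $|\bigcup_{i=1}^n q_i|$ slots, since any further slot would carry an item not requested by any query and could be dropped from the schedule. Consequently $|loc_{\max}^i - loc_{\min}^i| \leq |\bigcup_{i=1}^n q_i|$ for every $q_i$, with equality approachable when some request owns both the very first and the very last requested data slot of the cycle. Adding $\overline{t_{\text{wait}}}$ back and taking the maximum over $q_i$ produces the stated expression $acc_{\text{worst}} = \overline{t_{\text{wait}}} + |\bigcup_{i=1}^n q_i|$.

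The main obstacle I expect is the bookkeeping around the penalty terms $\sigma_1 \hat t$ and $\sigma_2$. One must argue that each channel-switching slot and each empty-node slot counted in~\eqref{eq2:acc_request} is a slot of the broadcast cycle that has \emph{already} been budgeted inside the $|\bigcup_{i=1}^n q_i|$ bound on cycle length, rather than an additive cost on top of it; otherwise the worst case could in principle exceed the claimed value. Once this absorption is verified, by observing that FPBS builds exactly one linear timeline per channel whose total length is capped by the number of distinct scheduled items, the theorem follows immediately from Theorem~\ref{thm:ge_aac}.
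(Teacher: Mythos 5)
Your proposal is correct and lands on the same quantitative bound as the paper, but it gets there by a different route. The paper argues entirely in terms of the FP*-tree: the worst case is a client who listens from the first to the last time slot, so the worst-case span is the height of the tree; since the accelerating branches can never be longer than the backbone, that height is the backbone's height, which the paper identifies with $|\bigcup_{i=1}^{n} q_i|$ because each distinct data item occurs exactly once in the backbone; adding $\overline{t_{\text{wait}}}$ gives the result. You instead start from Theorem~\ref{thm:ge_aac}, collapse the item, switching, and skipping terms into the single span $|loc_{Max}^i - loc_{min}^i|$, and bound that span by the cycle length on the channels. The two are equivalent because the schedule-mapping stage places each tree level into one time slot, so tree height and cycle length coincide; but your version is the more honest one on the point the paper glosses over, namely that $\sigma_1\hat{t}$ and $\sigma_2$ are slots already counted inside the span rather than costs added on top of it. One small observation would tighten both arguments: the empty nodes inserted during backbone construction never create a level of their own (an empty node is only added at a level where its parent already has a data-item child), which is why the backbone height, and hence the cycle length, cannot exceed the number of distinct requested items even though the backbone is not a pure chain of data nodes. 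With that noted, your absorption step goes through and the theorem follows as you describe.
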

\begin{proof}
	In general, the worse case is the scenario that a client access the channels from the first time slot to the last time slot. In other words, the worse access time of FPBS will be the height of the FP*-tree. According to the design of FPBS approach, the accelerating branches of FP*-tree is impossible to be longer than the backbone of FP*-tree. Hence, the height of the FP*-tree $\mathcal{H_T}$ will be the height of the backbone, $|\bigcup_{i=1}^{n} q_i|$. In practice, each client tunes in channel at random time slot, so the access time in worst case $acc_{\text{worst}}$ will be $\overline{t_{\text{wait}}}+|\bigcup_{i=1}^{n} q_i|$.
\end{proof}

In FPBS, each data item is not replicated in the FP*-tree's backbone and $|\bigcup_{i=1}^{n} q_i|$. In this work, we focus on minimizing the average access time and the proposed FPBS approach can effectively shorten the access time of each request using the accelerating branches. In \eqref{eq2:acc_request}, the terms $|q_i\cap\mathcal{F}|$ and $|q_i\setminus\mathcal{F}|$ are uncertain since the relation between request $q_i$ and the maximal frequent item-set $F$ is unpredictable. Hence, FPBS focus on minimizing the frequencies of channel switching or occupied slot (empty node in the FP*-tree) skipping, such as $\sigma_1$ and $\sigma_2$ in \eqref{eq2:acc_request}. This problem is solved by FP*-tree using the accelerating branches in our proposed approach. In other words, FPBS is proposed for effectively make the upper bound of access time be tighter. Thus, the worst case becomes a very rare occurrence.

\section{Simulation Results}
\label{simulation}
We validate and discuss the performance of FPBS in terms of average access time by running the experimental simulations in different scenarios. The unit of time is a time slot. All the simulations are written in C++ and executed on a Windows 7 server which is equipped with an Intel (R) Core (TM) i7-3770 CPU @ 3.4GHZ and 12G RAM. We use Quandl databases~\cite{QuandlWIKI} to extract the U.S stock prices and then use the obtained stock dataset as the input of our simulation.

We assume that the maximum number of channels is 10 ($|C|=2,3,\dots,10$) in the simulation. Therefore, we assume that one of the channels is the uplink for receiving the request and the remaining 10 channels are used as the downlink broadcasting channel. The detailed parameters of our simulations are shown in Table~\ref{simulation_parameters}.

In the simulations, FPBS is conducted in online and offline modes. In the online mode, the system will use a buffer to keep the information of queries and request data items. When the buffer becomes full, the system will start to schedule data into the broadcasting channels. The scheduled data items will be removed from the buffer and new user demands are continuously coming in the buffer. It means that the FP*-tree and schedule may change during the simulation. Conversely, we assume that the system in the offline mode schedules the data after storing all the requested information in the buffer.
\begin{table}[!ht]
	\renewcommand{\arraystretch}{1.1}
	\caption{Simulation Parameters}
	\label{simulation_parameters}
	\centering
	\small
	\begin{tabular}{p{3.4cm}M{1cm}M{3.1cm}}
		\hline
		\textbf{Parameter} & \textbf{Default Value} & \textbf{Range (type)}\\
		\hline
		Size of dataset, $|D|$ & 500 & 100, 300, 500, 700, 900\\
		Number of users & 5000 & --\\
		\begin{tabular}{@{}l} Maximum number of \\requested data items, $q_{max}$
		\end{tabular}
		& 10 & 2, 4, 6, 8, 10\\
		Number of downlink broadcast channels, $|C|$ & 6 & 2, 3, $\dots$, 10\\
		Size of buffer & 3000 & 500, 1000, $\dots$, 4500\\
		\hline
	\end{tabular}
\end{table}
\begin{figure*}[!t]
	\centering
	\subfigure[$|C|=3$]{
		\label{fig:datasize:c3} 
		\includegraphics[width=0.325 \textwidth]{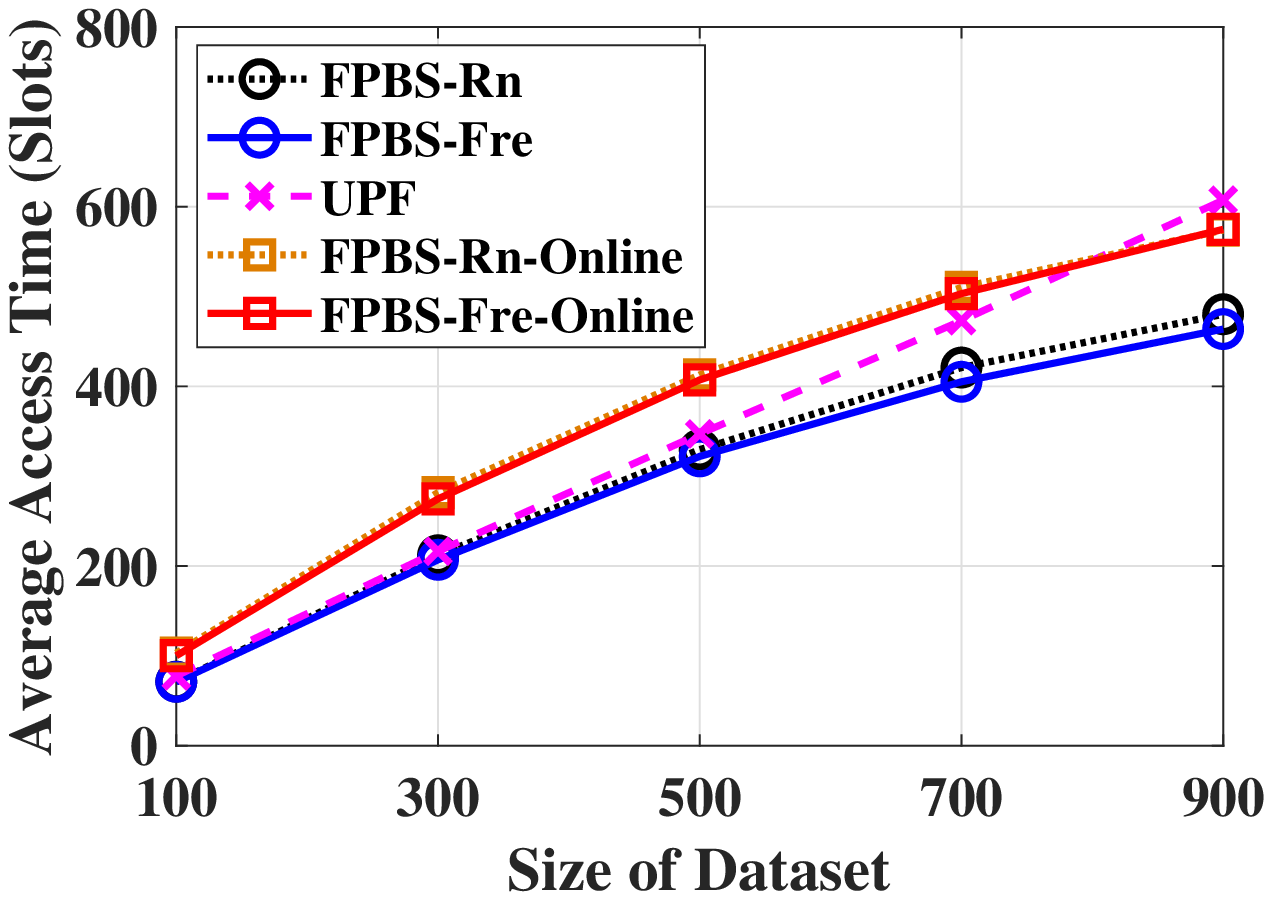}}
	\subfigure[$|C|=6$]{
		\label{fig:datasize:c6} 
		\includegraphics[width=0.325 \textwidth]{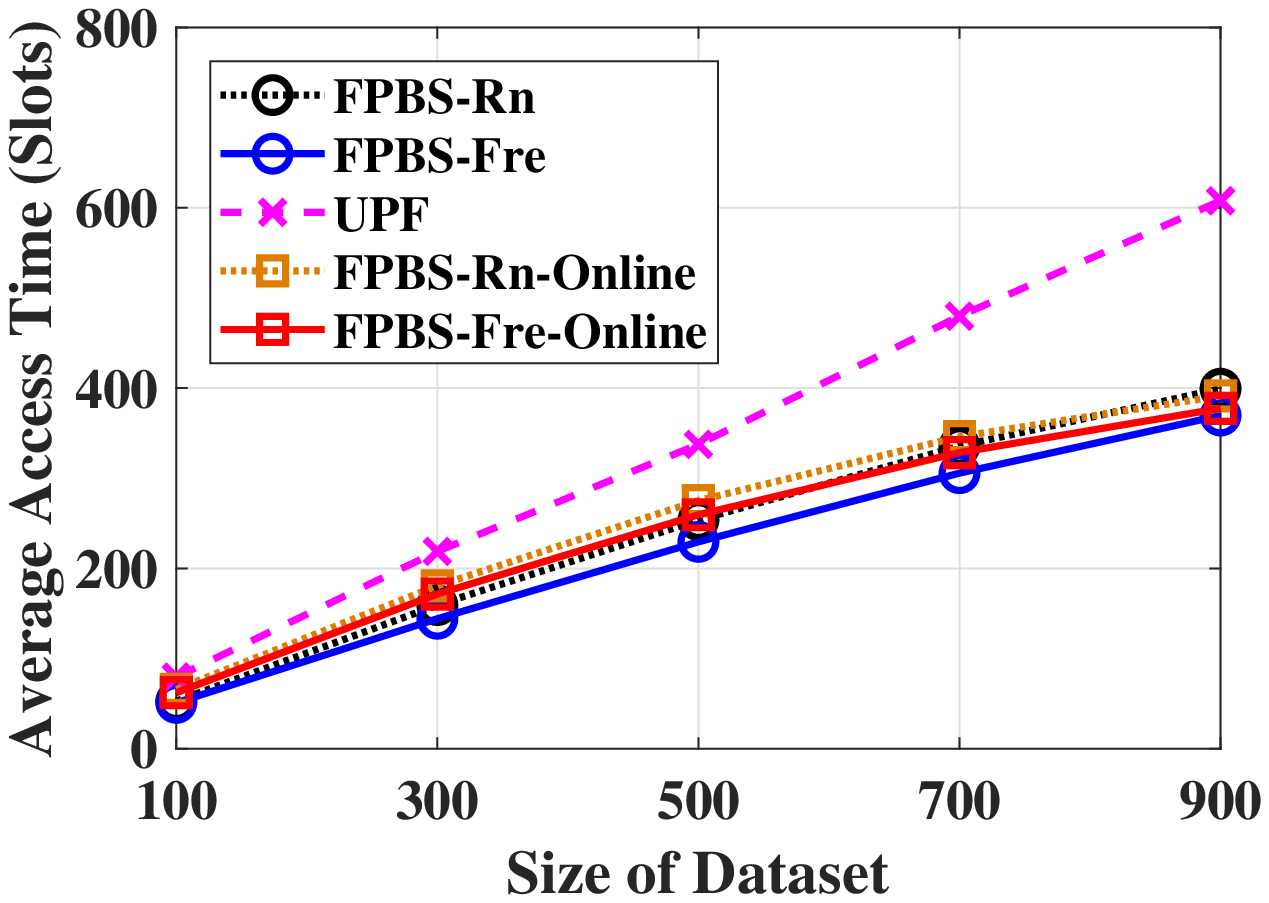}}
	\subfigure[$|C|=9$]{
		\label{fig:datasize:c9} 
		\includegraphics[width=0.325 \textwidth]{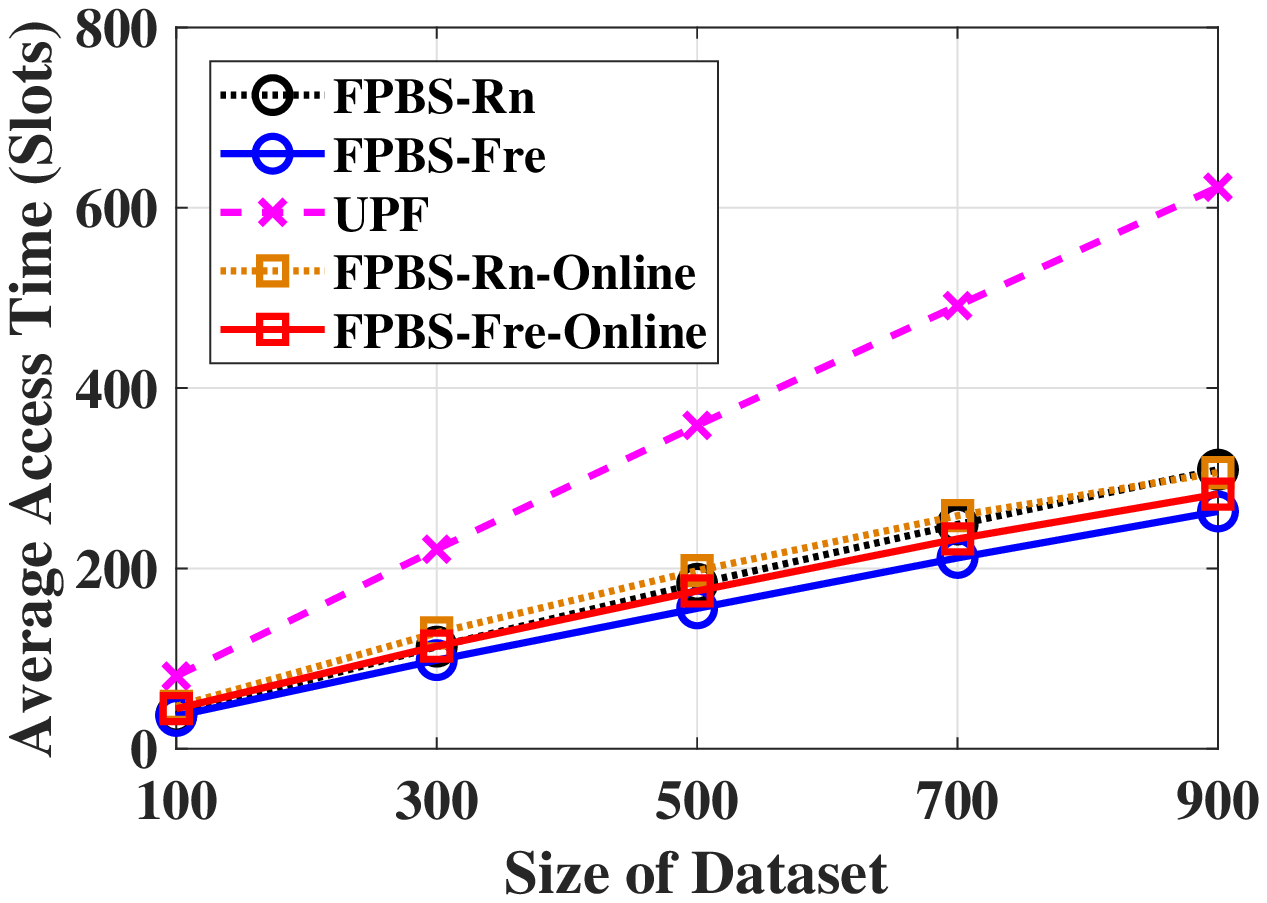}}
	\caption{Effect of the different sizes of dataset with different number of channels: \subref{fig:datasize:c3} $|C|=3$, \subref{fig:datasize:c6} $|C|=6$, and \subref{fig:datasize:c9} $|C|=9$
	}
	\label{fig:datasize} 
	\vspace{-15pt}
\end{figure*}

Note that there are two selecting strategies during scheduling process of FPBS, Request-Number-First and Frequency-First. Request-Number-First strategy is to select the query according to the length of its requested data items first and then selecting the query according to its average access frequency if multiple queries request same number of data items. Frequency-First strategy is to select the query according to its average access frequency first and then select the query according to the length of its requested data items if multiple queries have the same average access frequency. Hence, we discuss the above two strategies in online and offline modes respectively.

\begin{figure*}[!t]
	\centering
	\subfigure[$|D|=100$]{
		\label{fig:channels:100} 
		\includegraphics[width=0.325 \textwidth]{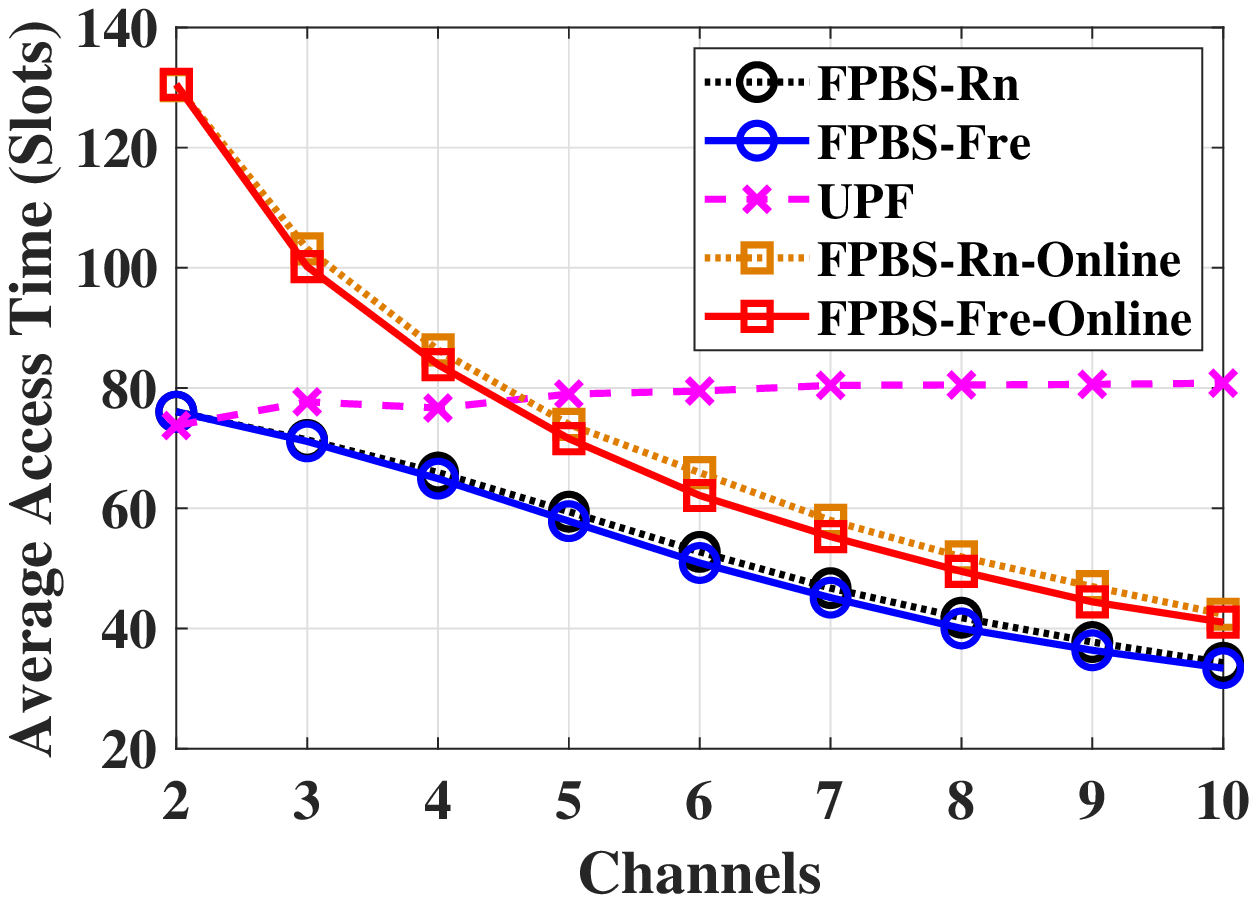}}
	\subfigure[$|D|=300$]{
		\label{fig:channels:300} 
		\includegraphics[width=0.325 \textwidth]{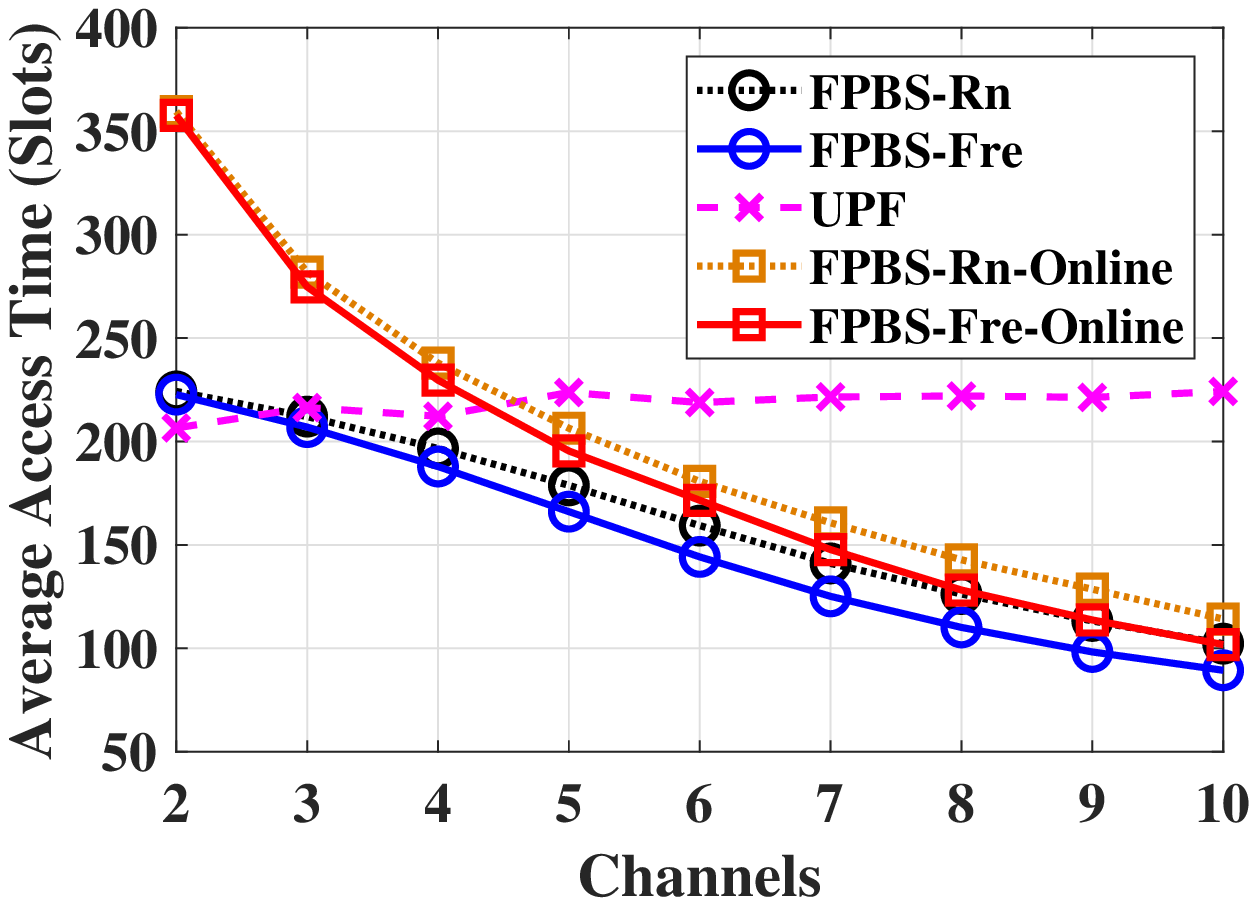}}
	\subfigure[$|D|=500$]{
		\label{fig:channels:500} 
		\includegraphics[width=0.325 \textwidth]{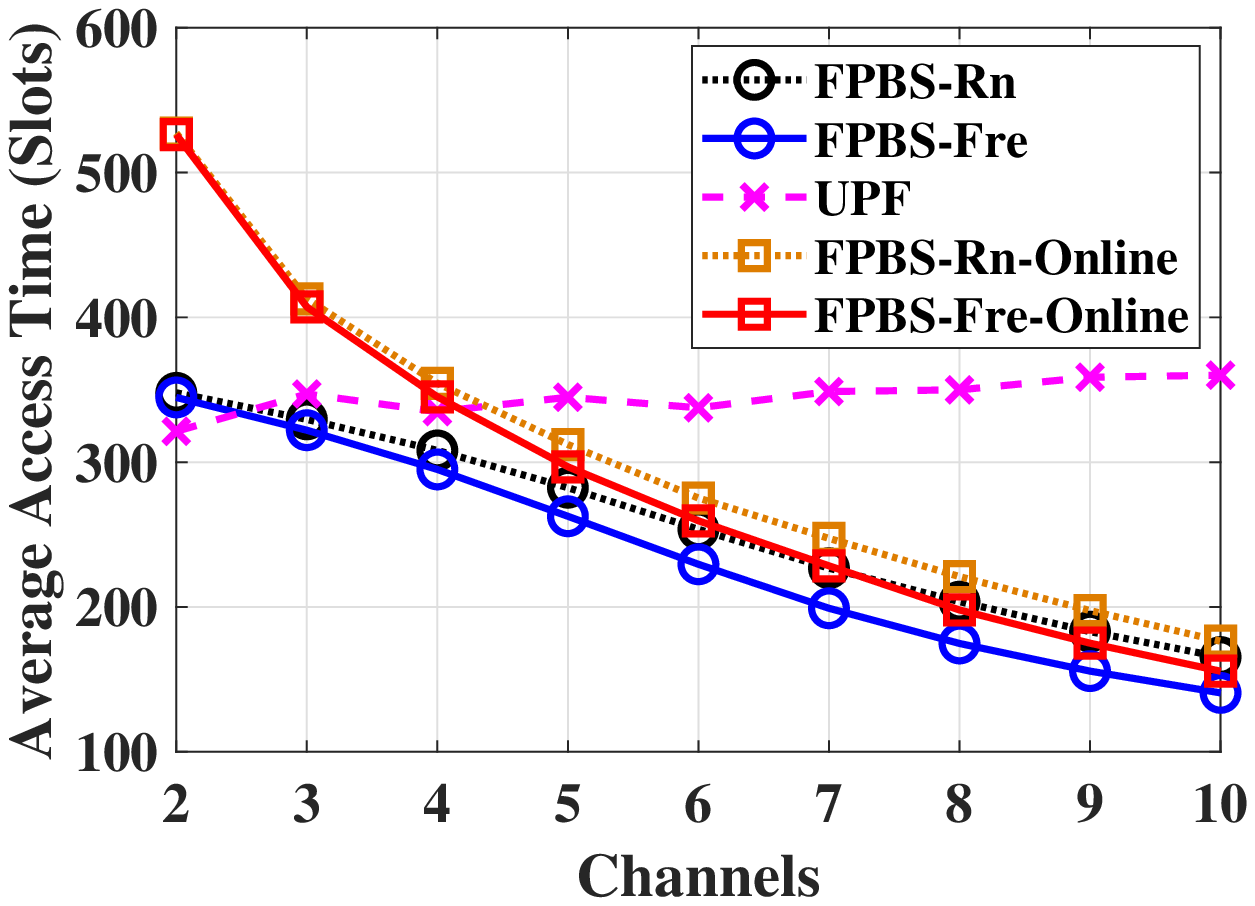}}\\ \vspace{-5pt}
	\subfigure[$|D|=700$]{
		\label{fig:channels:700} 
		\includegraphics[width=0.325 \textwidth]{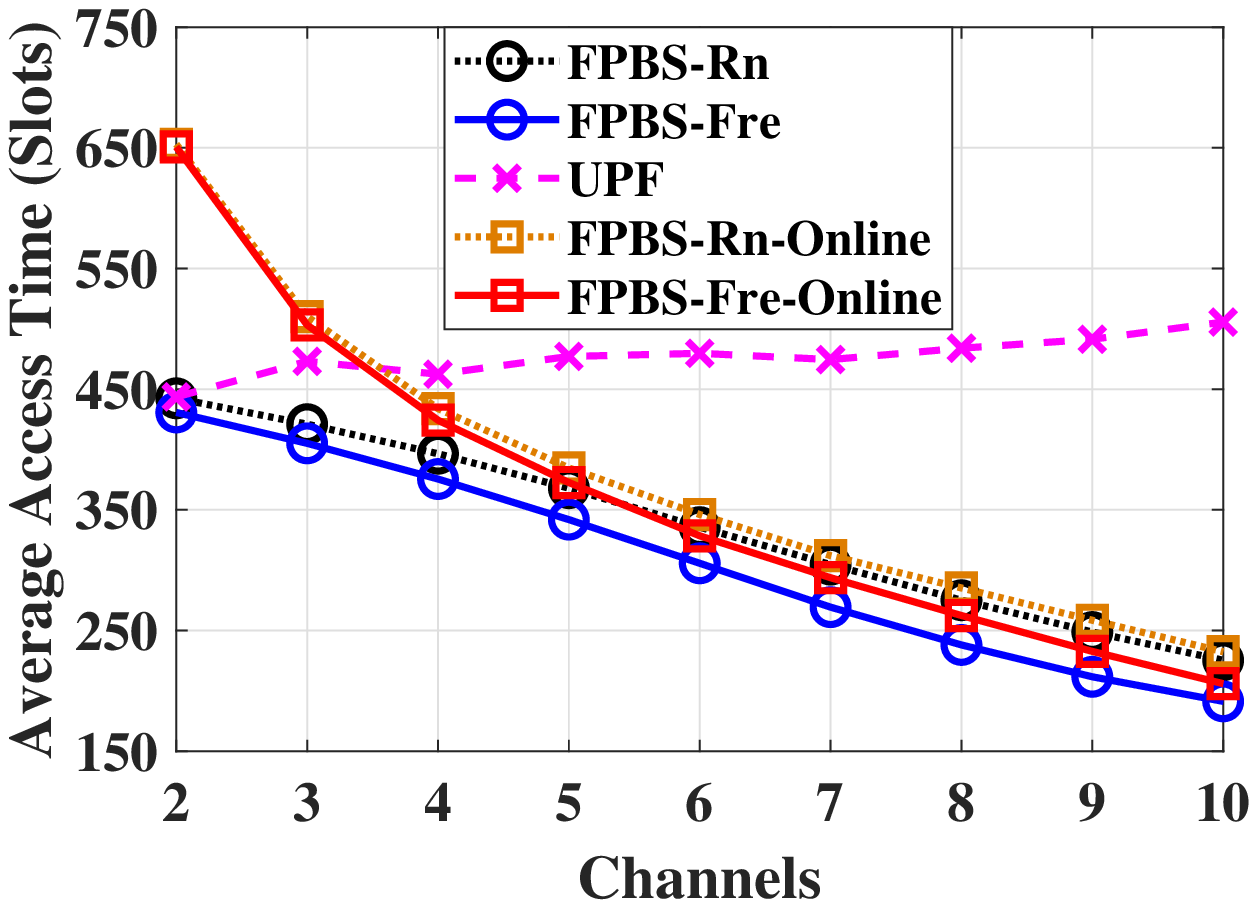}}
	\subfigure[$|D|=900$]{
		\label{fig:channels:900} 
		\includegraphics[width=0.325 \textwidth]{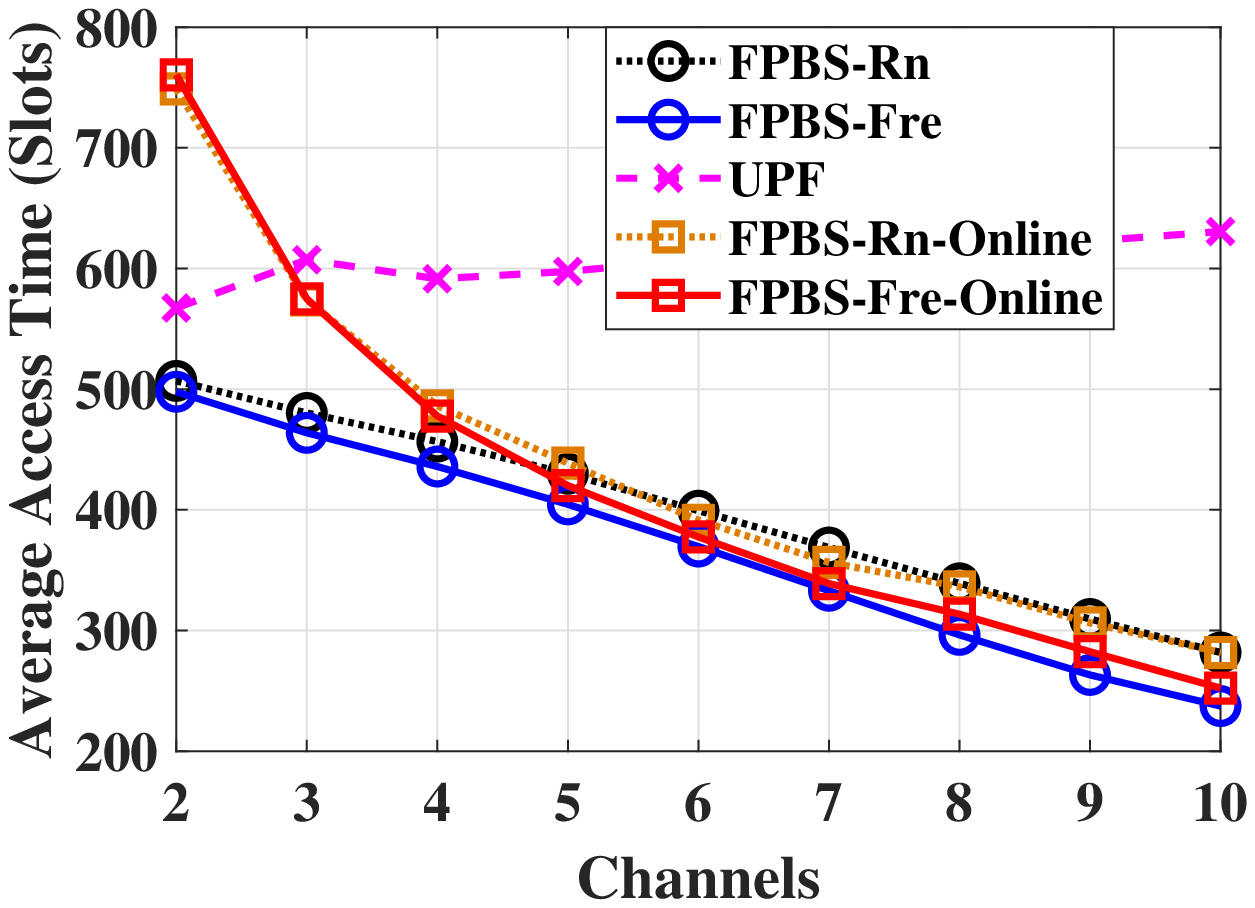}}
	\vspace{-5pt}
	\caption{Effect of the different number of channels with different sizes of dataset: \subref{fig:channels:100} $|D|=100$, \subref{fig:channels:300} $|D|=300$, \subref{fig:channels:500} $|D|=500$, \subref{fig:channels:700} $|D|=700$, and
		\subref{fig:channels:900} $|D|=900$
	}
	\label{fig:channels} 
\end{figure*}
\begin{figure*}[!t]
	\centering
	\subfigure[$|D|=100$]{
		\label{fig:num:request:data:100} 
		\includegraphics[width=0.325 \textwidth]{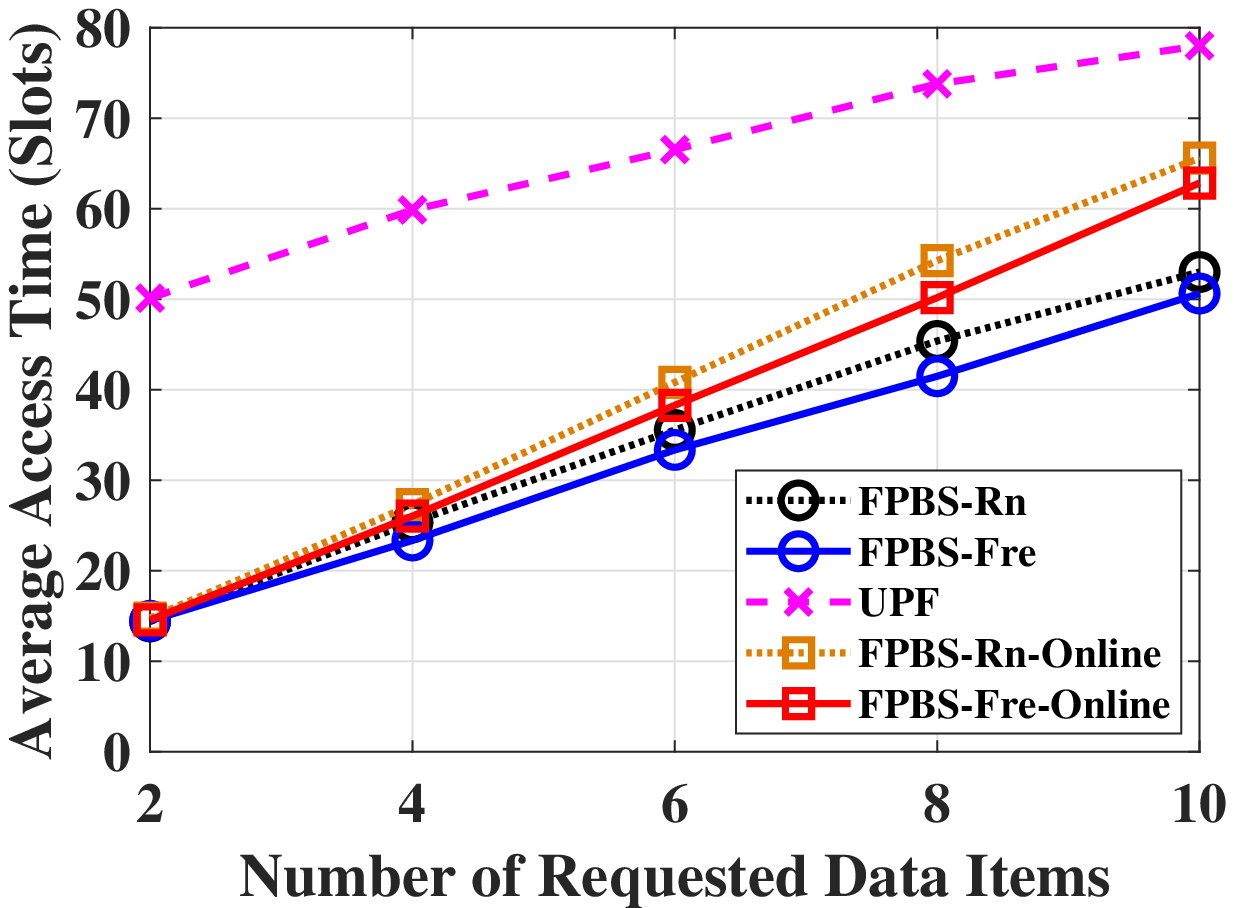}}
	\subfigure[$|D|=300$]{
		\label{fig:num:request:data:300} 
		\includegraphics[width=0.325 \textwidth]{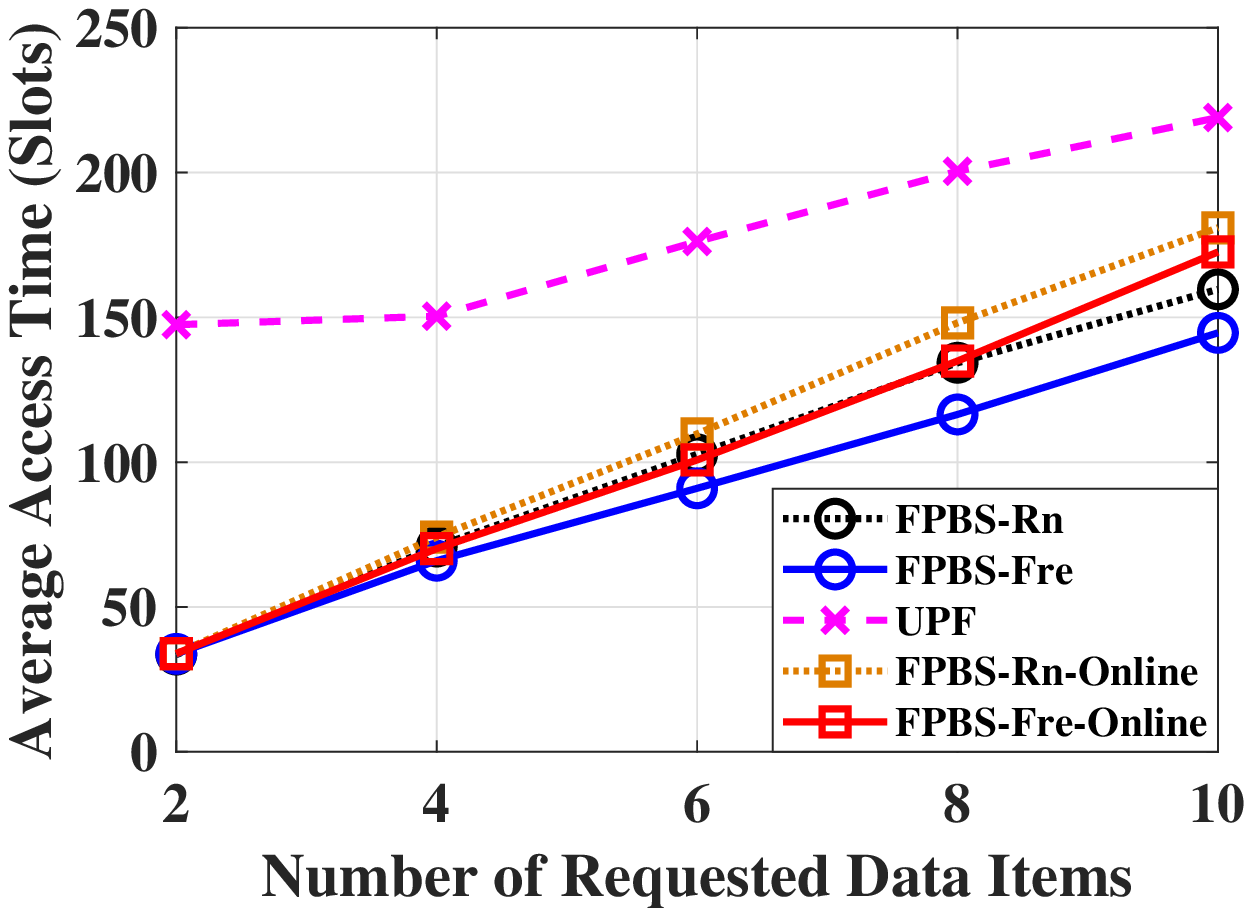}}
	\subfigure[$|D|=500$]{
		\label{fig:num:request:data:500} 
		\includegraphics[width=0.325 \textwidth]{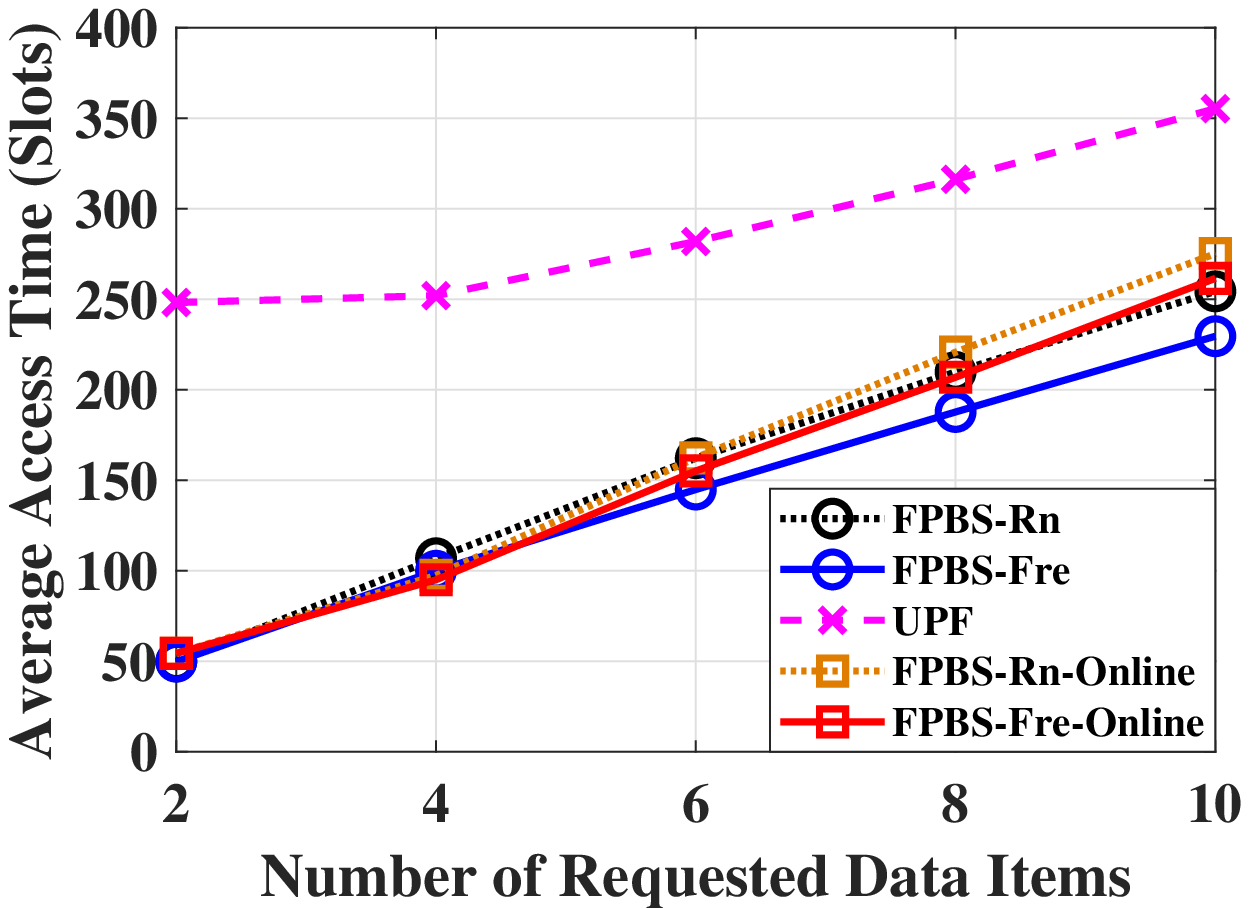}}\\ 
	\vspace{-5pt}
	\subfigure[$|D|=700$]{
		\label{fig:num:request:data:700} 
		\includegraphics[width=0.325 \textwidth]{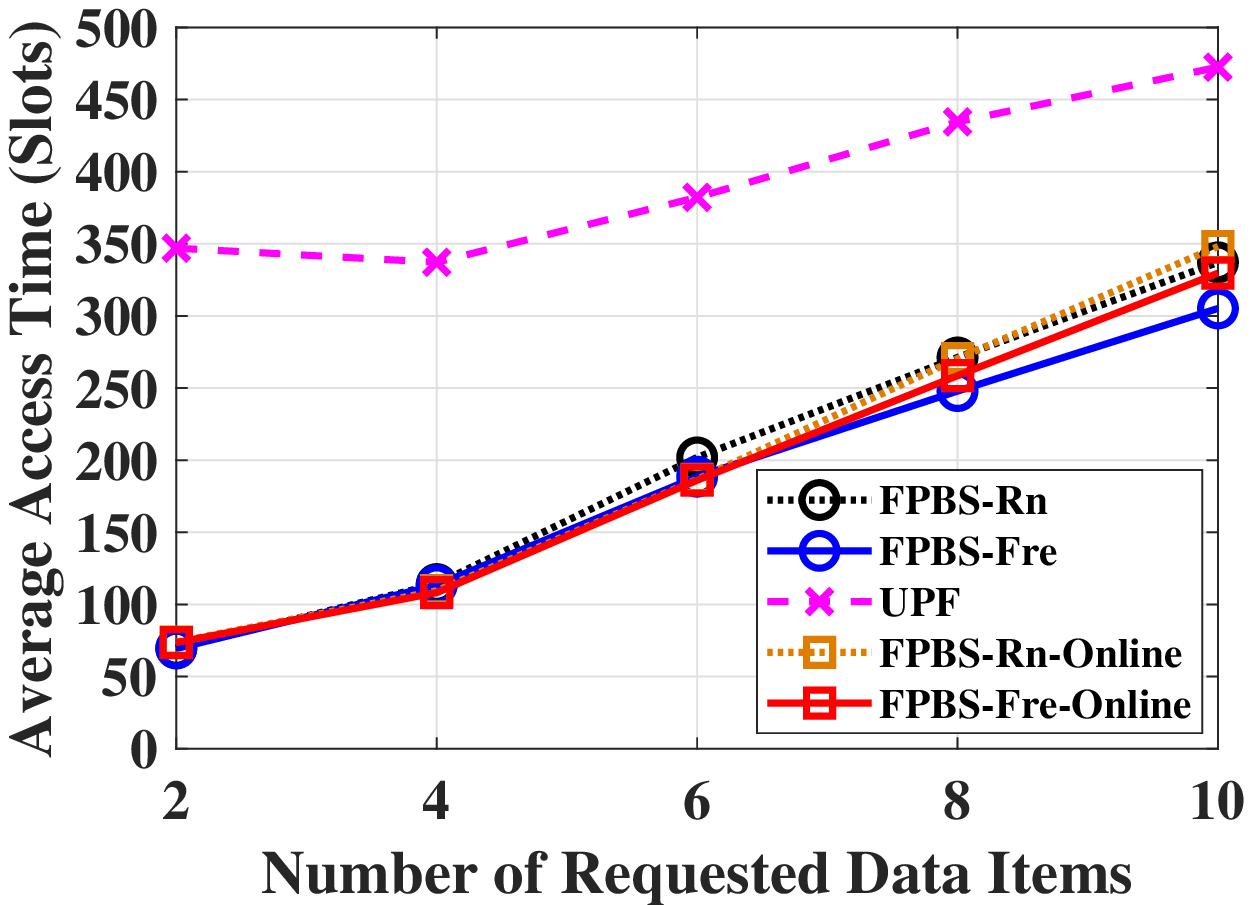}}
	\subfigure[$|D|=900$]{
		\label{fig:num:request:data:900} 
		\includegraphics[width=0.325 \textwidth]{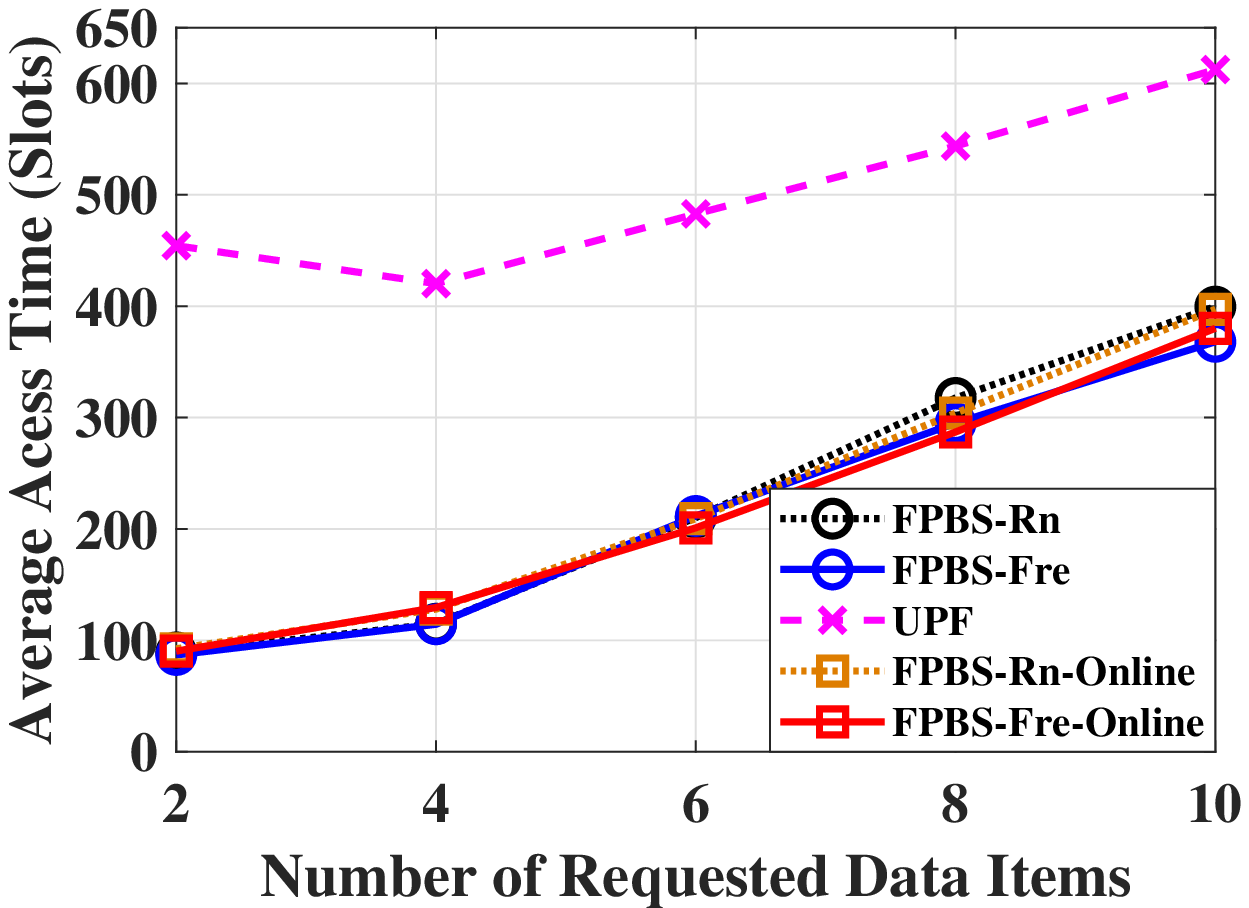}}
	\caption{Effect of the different number of requested data items with different sizes of dataset: \subref{fig:num:request:data:100} $|D|=100$, \subref{fig:num:request:data:300} $|D|=300$, \subref{fig:num:request:data:500} $|D|=500$, \subref{fig:num:request:data:700} $|D|=700$, and
		\subref{fig:num:request:data:900} $|D|=900$
	}
	\label{fig:num:request:data} 
\end{figure*}

To the best of our knowledge, none of existing works model the optimal performance of the multi-item request scheduling simultaneously considering the channel switching and dependencies between different requests over multi-channel dissemination environments. Only~\cite{He2015118} provides a heuristic algorithm, UPF, to discuss the similar problem. This is the reason that we choose UPF as the comparative baseline in the simulations.

\subsection{Size of Dataset}
In the first simulation, we discuss the performance of FPBS with different sizes of dataset in terms of average access time. Note that the size of dataset indicates the number of different data items stored in the dataset. Fig.~\ref{fig:datasize} shows the results in three different cases if the number of channels $|C|=3$, $|C|=6$, and $|C|=9$, respectively. In the $|C|=3$ channels environment, as shown in Fig.~\ref{fig:datasize:c3}, UPF can outperform the online FPBS approaches, FPBS-Fre-Online and FPBS-Rn-Online, if the size of dataset, $|D|$, is smaller than 800. The offline FPBS, FPBS-Fre and FPBS-Rn, can always have a better performance than UPF does in all different sizes of dataset.

The results depicted from Fig.~\ref{fig:datasize:c3} to Fig.~\ref{fig:datasize:c9} show that UPF has similar performances in different number of channels environments and the trends of UPF's average access time are always linear increasing. According to the results in Fig.~\ref{fig:datasize:c6} and Fig.~\ref{fig:datasize:c9}, we can know that both of online and offline FPBS approaches can outperform UPF in different sizes of datasets when $|C|\geq 6$. Additionally, the Frequency-First strategy, FPBS-Fre, always has the best performance in different scenarios.

\subsection{Number of Channels}
In this part, we discuss the performance of FPBS in different scenarios that the number of broadcasting channels is set from 2 to 20 and the results are shown in Fig.~\ref{fig:channels}. The results indicate the existing method, UPF, is not suitable to multiple channel ($C\geq 4$) broadcasting environments and UPF cannot dynamically schedule data items with the consideration of each user's requests. That is to say, in comparison with the proposed approach, UPF can not utilize these channels if $C\geq 4$. Fig.~\ref{fig:channels:100} and Fig.~\ref{fig:channels:300} show that UPF has a stable performance in the broadcasting environments with different number of channels when the size of dataset is small ($|D|\leq 300$). Conversely, the results from Fig.~\ref{fig:channels:500} to Fig.~\ref{fig:channels:900} show that the average access time of UPF is unstable and becomes a slightly increasing trend when the size of dataset becomes large ($|D|\geq 500$). 
The possible reason for this result is that UPF aims to minimize the request miss rate, not the average access time. There may be a trade-off between minimizing the request miss rate and the average access time.

\begin{figure*}[!t]
	\centering
	\subfigure[$|C|=3$]{
		\label{fig:buffersize:c3} 
		\includegraphics[width=0.325 \textwidth]{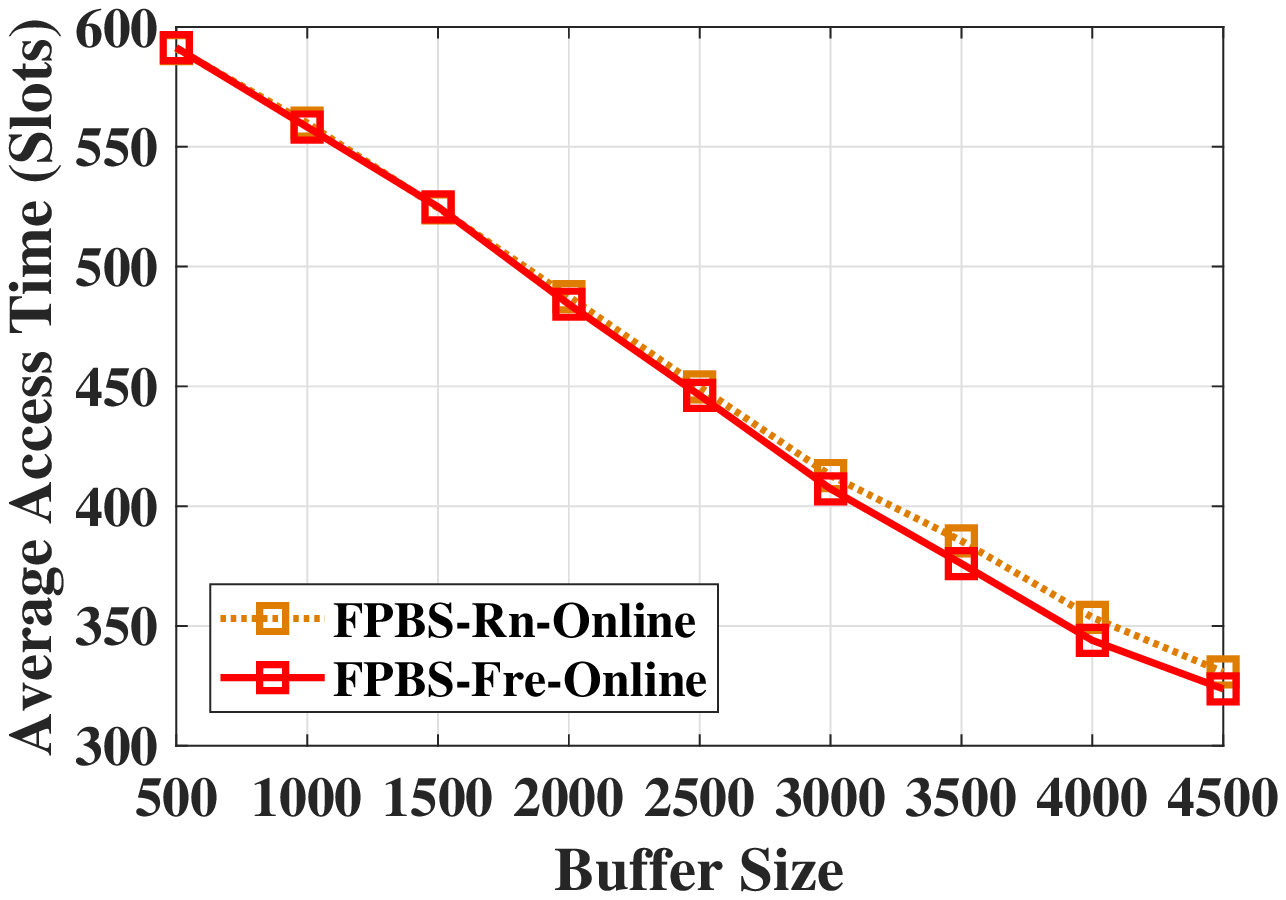}}
	\subfigure[$|C|=6$]{
		\label{fig:buffersize:c6} 
		\includegraphics[width=0.325 \textwidth]{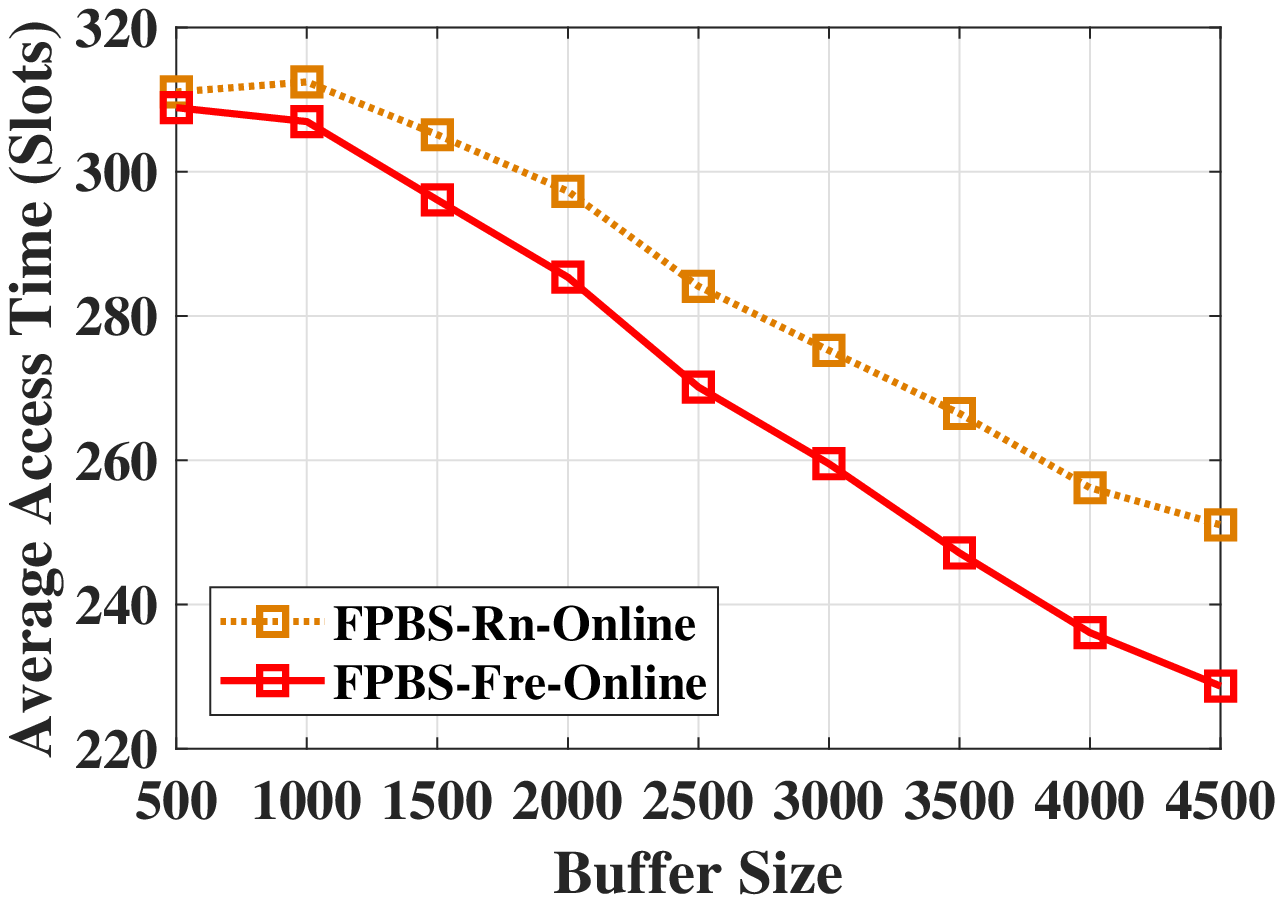}}
	\subfigure[$|C|=9$]{
		\label{fig:buffersize:c9} 
		\includegraphics[width=0.325 \textwidth]{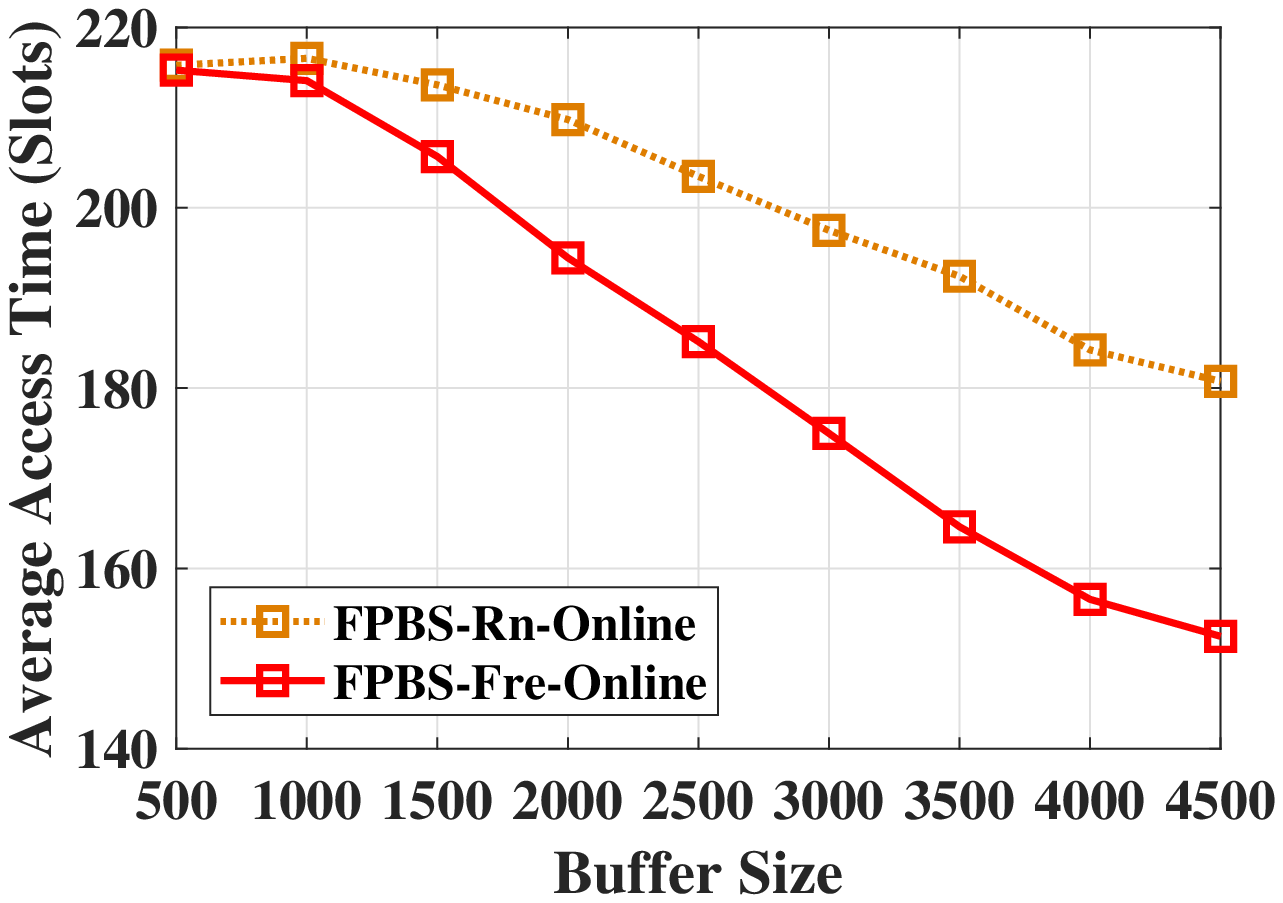}}
	\vspace{-5pt}
	\caption{Effect of the different size of buffer with different number of channels: \subref{fig:buffersize:c3} $|C|=3$, \subref{fig:buffersize:c6} $|C|=6$, and \subref{fig:buffersize:c9} $|C|=9$
	}
	\label{fig:buffersize} 
\end{figure*}

Fig.~\ref{fig:channels:100} and Fig.~\ref{fig:channels:300} shows the results of each approach in small dataset. FPBS in the offline mode, FPBS-Fre and FPBS-Rn, can have a better performance since the system consider all the requests while constructing the FP*-tree. According to the results in Fig.~\ref{fig:channels:500}, Fig.~\ref{fig:channels:700}, and Fig.~\ref{fig:channels:900}, the Frequency-First strategies, FPBS-Fre and FPBS-Fre-Online, have better performances than the Request-Number-First strategies, FPBS-Rn and FPBS-Rn-Online when the size of dataset becomes large ($|D|\geq 500$).

\subsection{Number of Requested Data Items}
If the number of requested data items becomes larger, the possibility of data dependency between each query becomes higher.
In this subsection, we consider the effect of the different number of requested data items on the average access time. As shown in Fig.~\ref{fig:num:request:data}, one can observe that all the FPBS-based approaches can outperform UPF when the maximum number of requested data items $q_{max}$ is smaller than 11. When $q_{max}$ is 2, all the FPBS-based approaches have similar performances on the average access time. As the value of $q_{max}$ increases, the average access time in all the FPBS-based approaches also increases linearly.

According to the result in Fig.~\ref{fig:num:request:data}, we can know that the Frequency-First strategies are better than the Request-Number-First strategies since the performances of FPBS-Fre and FPBS-Fre-Online are more smoothly increasing than the performances of FPBS-Rn and FPBS-Rn-Online. In addition, FPBS-Fre can has the best performance and its trend is almost parallel to the trend of UPS's performance.

\subsection{Buffer Size}
In the last simulation, we discuss the effect of the different size of buffer on the average access time for comparing two proposed online approaches, FPBS-Rn-Online and FPBS-Fre-Online. We also consider the trend of performance in some scenarios that the number of channel is respectively set to 3, 6, and 9.

The result in Fig.~\ref{fig:buffersize} indicates that both FPBS-Rn-Online and FPBS-Fre-Online can have shorter average access time as the size of buffer increases. In an environment providing small number ($C=3$) of channels, as shown as Fig.~\ref{fig:buffersize:c3}, FPBS-Fre-Online can has a slightly better performance than FPBS-Rn-Online does when the buffer can store more than 2500 data items. The results in Fig.~\ref{fig:buffersize:c6} and Fig.~\ref{fig:buffersize:c9} show that FPBS-Fre-Online is much better than FPBS-Rn-Online with different size of buffer when the number of channels increases ($C\geq 6$).

\subsection{Open Issues}
In this subsection, we summarize some remaining issues (or potential challenges) in on-demand multi-channel data dissemination systems as follows:
\begin{itemize}
	\item \textbf{Hardware constraint:} Although the minimum cost $\hat{t}$ for channel switching is normalized as one time slot in FPBS, it is difficult to implement a broadcasting system that meets this condition due to hardware limitations.
	\item \textbf{Cross-layer system design:} In this paper, we design a server-side data scheduling for serving the multi-item requests. For wireless networks, the time-varying and uncertain nature of wireless channels can be considered in the scheduling. Thus, the server needs a new cross-layer system design to simultaneously access the request information in the application layer and channel information in the physical layer and then schedule data items more efficiently.
\end{itemize}

\section{Conclusion}
\label{conclusion}
In this paper, we investigate and formulate an emerging problem, DBCA, in multi-channel wireless data dissemination environments. We also prove that the DBCA problem is $\mathcal{NP}$-complete. Then, we present a heuristic scheduling approach, FPBS, to avoid data conflicts on multiple broadcasting channels. In FPBS, we use frequent patterns of requested data items to build a FP*-tree for extracting the correlation between each received request. Thus, data conflicts can be avoided. During the construction of FP*-tree's accelerating branch, adding empty nodes at appropriate positions makes the user client have sufficient time to switch the channel for obtaining the required data. We not only analyze that FPBS can be done in polynomial time but also present the upper-bound of access time of a request which is related to size of dataset. According to the simulation results, FPBS is much better than the existing work, UPF, in most of cases.


%



\ifCLASSOPTIONcompsoc
  \section*{Acknowledgments}
\else
  \section*{Acknowledgment}
\fi

This research was supported by the Ministry of Science and Technology, Taiwan under Grant No. MOST 107-2221-E-027-099-MY2, MOST 109-2221-E-027-095-MY3, and MOST 110-2222-E-035-004-MY2.

\ifCLASSOPTIONcaptionsoff
  \newpage
\fi



\bibliographystyle{IEEEtran}
\bibliography{reference}
\end{document}